\documentclass{article}
\usepackage{fullpage}
\usepackage{epsfig}
\usepackage{amsthm}
\usepackage{amsmath}
\usepackage{url} 

\newcommand{\comment}[1]{{}}
\newcommand{\old}[1]{{}}


\newcommand{\B}{BB({\cal R})}
\newcommand{\eps}{\varepsilon}

\newtheorem{theorem}{Theorem}[section]
    \newtheorem{lemma}[theorem]{Lemma}
    
\newtheorem{claim}[theorem]{Claim}

\newtheorem{observation}{Observation}

\newtheorem{conjecture}{Conjecture}

\title{Approximating Maximum Independent Set for Rectangles in the Plane}

\author{Joseph S. B. Mitchell\thanks{Stony Brook University, 
Stony Brook, NY 11794-3600, joseph.mitchell@stonybrook.edu}}

\begin{document}

\maketitle

\begin{abstract}
We give a polynomial-time constant-factor approximation algorithm for
maximum independent set for (axis-aligned) rectangles in the plane.
Using a polynomial-time algorithm, the best approximation factor
previously known is $O(\log\log n)$.  The results are based on a new
form of recursive partitioning in the plane, in which faces that are
constant-complexity and orthogonally convex are recursively
partitioned into a constant number of such faces.
\end{abstract}

\section{Introduction}

Given a set ${\cal R}=\{R_1,\ldots,R_n\}$ of $n$ axis-aligned
rectangles in the plane, the {\em maximum independent set of
  rectangles} (MISR) problem seeks a maximum-cardinality subset,
${\cal R}^*\subseteq {\cal R}$, of rectangles that are {\em
  independent}, meaning that any two rectangles of ${\cal R}^*$ are
interior-disjoint.

The MISR is an NP-hard (\cite{fowler1981optimal,imai1983finding})
special case of the fundamental Maximum Independent Set (MIS)
optimization problem, in which, for a given input graph, one seeks a
maximum-cardinality subset of its vertices such that no two vertices
are joined by an edge.
The MIS is an optimization problem that is very difficult to
approximate, in its general form; the best polynomial-time
approximation algorithm is an $O(n/\log^2
n)$-approximation~\cite{boppana1992approximating}, and there is no
polynomial-time $n^{1-\delta}$-approximation algorithm, for any fixed
$\delta>0$, unless P$=$NP~\cite{zuckerman2007linear}.

Because the MIS is such a challenging problem in its general setting,
the MIS in special settings, particularly geometric settings, has
been extensively studied. For string graphs (arcwise-connected sets), an
$n^\eps$-approximation algorithm is known~\cite{fox2011computing}.
For outerstring graphs, a polynomial-time exact algorithm is
known~\cite{outerstring}, provided a geometric representation of the
graph is given.
For geometric objects that are ``fat''(e.g., disks, squares), a
Polynomial-Time Approximation Scheme (PTAS) is
known~\cite{erlebach2005polynomial}.  Fatness is critical in many
geometric approximation settings.  The MISR is one of the most natural
geometric settings for MIS in which the objects are {\em not} assumed
to be fat -- rectangles can have arbitrary aspect ratios.
The MISR arises naturally in various applications, including
resource allocation, 
data mining, 
and map labeling. 

For the MISR, the celebrated results of Adamaszek, Har-Peled, and
Wiese~\cite{adamaszek2019approximation,adamaszek2013approximation,adamaszek2014qptas,har2014quasi}
give a {\em quasipolynomial}-time approximation scheme, yielding a
$(1-\eps)$-approximation in time $n^{poly(\log n/\eps)}$.
Chuzhoy and Ene~\cite{chuzhoy2016approximating} give an improved, but
still not polynomial, running time: they present a
$(1-\eps)$-approximation algorithm with a running time of
$n^{O((\log\log n/\eps)^4)}$
%
%
Recently, Grandoni, Kratsch, and
Wiese~\cite{grandoni2019parameterized} give a parameterized
approximation scheme, which, for given $\eps>0$ and integer $k>0$,
takes time $f(k,\eps)n^{g(\eps)}$ and either outputs a solution of
size $\geq k/(1+\eps)$ or determines that the optimum solution has
size $<k$.

Turning now to {\em polynomial}-time approximation algorithms for
MISR, early algorithms achieved an approximation ratio of $O(\log
n)$~\cite{agarwal1998label,khanna1998approximating,nielsen2000fast} and $\lceil \log_k n\rceil$, for any
fixed $k$~\cite{berman2001improved}.  In a significant breakthrough, the $O(\log
n)$ approximation bound was improved to $O(\log\log
n)$~\cite{chalermsook2009maximum,Chalermsook-APPROX11}, which has
remained the best approximation factor during the last decade.
Indeed, it has been a well known open problem to improve on this
approximation factor: ``even for the axis-parallel rectangles case
currently has no constant factor approximation algorithm in polynomial
time''~\cite{adamaszek2019approximation}, and ``obtaining a PTAS, or
even an efficient constant-factor approximation remains elusive for
now''~\cite{chuzhoy2016approximating}.

\subsection*{Our Contribution}

We give a polynomial-time $O(1)$-approximation algorithm for MISR,
improving substantially on the prior $O(\log\log n)$ approximation
factor for a polynomial-time algorithm. The algorithm is based on a
novel means of recursively decomposing the plane, into
constant-complexity (orthogonally convex) ``corner-clipped
rectangles'' (CCRs), such that any set of disjoint (axis-aligned)
rectangles has a constant-fraction subset that ``nearly respects'' the
decomposition, in a manner made precise below.
%
%
The fact that the CCRs are of constant complexity, and the
partitioning recursively splits a CCR into a constant number of CCRs,
allow us to give an algorithm, based on dynamic programming, that
computes such a subdivision, while maximizing the size of the subset
of ${\cal R}$ of rectangles that are not penetrated (in their
interiors) by the cuts. Together, the structure theorem and the
dynamic program, imply the result.

Two key ideas are used in obtaining our result: (1) instead of a
recursive partition based on a binary tree, partitioning each problem
into two subproblems, we utilize a $K$-ary partition, with constant
$K\leq 3$; (2) instead of subproblems based on rectangles (arising
from axis-parallel cuts), we allow more general cuts, which are
constant-complexity planar graphs having edges defined by
axis-parallel segments, with resulting regions being orthogonally
convex (CCR) faces of constant complexity.  Proof of our structural
theorem relies on a key idea to use maximal expansions of an input set
of rectangles, enabling us to classify rectangles of an optimal
solution in terms of their ``nesting'' properties with respect to
abutting maximal rectangles. This sets up a charging scheme that can
be applied to a subset of at least half of the maximal expansions of
rectangles in an optimal solution: those portions of cuts that cross
these rectangles can be accounted for by means of a charging scheme,
allowing us to argue that a constant fraction subset of original
rectangles exists that (nearly) respects a recursive partitioning into
CCRs. These properties then enable a dynamic programming algorithm to
achieve a constant factor approximation in polynomial time.

\section{Preliminaries}

Throughout this paper, when we refer to a rectangle we mean an
{\em axis-aligned}, closed rectangle in the plane.
We let $[x,x']\times[y,y']$ denote the rectangle whose projection on
the $x$-axis (resp., $y$-axis) is the (closed) interval $[x,x']$
(resp., $[y,y']$).
We often speak of the {\em sides} (or {\em edges}) of a rectangle; we
consider each rectangle to have four sides (top, bottom, left, right),
each of which is a closed (horizontal or vertical) line segment. Two
nonparallel sides of a rectangle meet at a common endpoint, one of the
{\em corners} of the rectangle, often distinguished as being
northeast, northwest, southwest, or southeast.
When we speak of rectangles being disjoint, we mean that their
interiors are disjoint.
We say that two rectangles {\em abut} (or {\em are in contact}) if
their interiors are disjoint, but there is an edge of one rectangle
that has a nonzero-length overlap with an edge of the other rectangle.
We say that a horizontal/vertical line segment $\sigma$ {\em
  penetrates} a rectangle $R\in{\cal R}$ if $\sigma$ intersects the
interior of $R$.  We say that a horizontal/vertical line segment
$\sigma$ {\em crosses} another horizontal/vertical segment $\sigma'$
if the two segments intersect in a point that is interior to both
segments; similarly, we say that $\sigma$ {\em crosses} a rectangle
$R$ if the intersection $\sigma\cap R$ is a subsegment of $\sigma$
that is contained in the interior of~$\sigma$.

The input to our problem is a set ${\cal R}$ of $n$ arbitrarily
overlapping (axis-aligned) rectangles in the plane.  Without loss of
generality, we can assume that the $x$-coordinates and $y$-coordinates
that define the rectangles ${\cal R}$ are distinct and are
integral. Let $x_1,x_2,\ldots,x_{2n}$, with $x_1< x_2< \cdots <
x_{2n}$, be the sorted $x$-coordinates of the left/right sides of the
$n$ input rectangles ${\cal R}$; similarly, let $y_1< y_2 < \cdots <
y_{2n}$ be the $y$-coordinates of the top/bottom sides of the input.
We let $\B=[x_1,x_{2n}]\times[y_1,y_{2n}]$ denote the axis-aligned
bounding box (minimal enclosing rectangle) of the rectangles ${\cal
  R}$.  The arrangement of the horizontal/vertical lines ($x=x_i$,
$y=y_j$) that pass through the sides of the rectangles ${\cal R}$
determine a {\em grid} ${\cal G}$, with faces (rectangular grid
cells), edges (line segments along the defining lines), and vertices
(at points where the defining lines cross).
%

In the maximum independent set of rectangles (MISR) problem on ${\cal
  R}$ we seek a subset, ${\cal R}^*\subseteq {\cal R}$, of maximum
cardinality, $k^*=|{\cal R}^*|$, such that the rectangles in ${\cal
  R}^*$ are {\em independent} in that they have pairwise disjoint
interiors. Our main result is a polynomial-time algorithm to compute
an independent subset of ${\cal R}$ of cardinality $\Omega(k^*)$.

\medskip
\noindent{\bf Corner-Clipped Rectangles (CCRs).}\quad
An {\em orthogonal polygon} is a simple polygon\footnote{We consider
  rectangles and polygons to be closed regions that include their
  boundary and their interior.} whose boundary consists of a finite
set of axis-parallel segments.  A {\em corner-clipped rectangle} $Q$
is a special orthogonal polygon obtained from a given rectangle $R$ by subtracting from $R$ a set of up to
four interior-disjoint rectangles, with each
rectangle containing exactly one of the four corners of $R$ in its interior.
In an earlier version of this
paper~\cite{DBLP:journals/corr/abs-2101-00326}, we utilized the full
generality of a CCR, having potentially all four corners clipped (resulting in a CCR with
up to 12 sides).  In this paper we employ a simplified case analysis, and we restrict ourselves to CCRs
with at most {\em one} corner clipped, so our CCRs are 
``L-shaped'' orthogonal polygons with 6 sides, 1 reflex vertex (of
internal angle 270 degrees), and 5 convex vertices (each with internal
angle 90 degrees).  Refer to Figure~\ref{fig:CCR}.
From now on, when we say ``CCR'' we mean a rectangle or an L-shaped CCR, with a single clipped corner.

\begin{figure*}[!ht]
	\centering
	\includegraphics[width=0.7\textwidth]{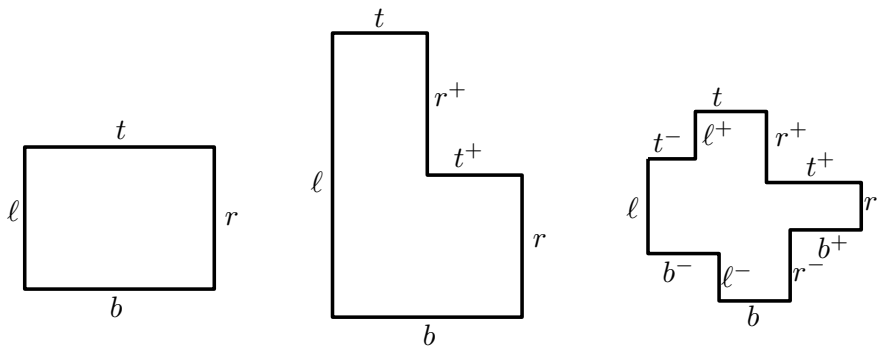}
	\caption{Examples of CCRs: Left: A rectangle with sides labeled (left $\ell$, right $r$, top $t$, and bottom $b$).
          Middle: An L-shaped CCR.
          Right: A general CCR with four clipped corners and 12 sides.}
	\label{fig:CCR}
\end{figure*}

\medskip
\noindent{\bf Maximal rectangles.}\quad
Let $I$ be an independent set of
rectangles within the bounding box $\B$.
We say that the set $I$ is {\em maximal within $\B$} if each rectangle
$R \in I$ has all four of its sides in nonzero-length contact with a side of another
rectangle of $I$, or with the boundary of $\B$.
An arbitrary independent subset $I=\{R_1,\ldots,R_k\}\subseteq {\cal
  R}$ of input rectangles $R_i$ can be converted into an independent
set of rectangles, $R'_i\supseteq R_i$, that are maximal within $B$,
in any of a number of ways, ``expanding'' each $R_i$
appropriately. (Note that the super-rectangles, $R'_i$ need not be
elements of the input set ${\cal R}$.)
To be specific, we define $I'=\{R'_1,\ldots,R'_k\}$ to be the set of
{\em maximal expansions}, $R'_i\supseteq R_i$, of the rectangles
$R_i$, obtained from the following process: Within the bounding box
$B$, we extend each rectangle $R_i$ upwards (in the $+y$ direction)
until it contacts another rectangle of $I$ or contacts the top edge of
$\B$. We then extend these rectangles leftwards, then downwards, and
then rightwards, in a similar fashion.  This results in a new set $I'$
of independent rectangles within $\B$, with each rectangle $R'_i$
containing the corresponding rectangle $R_i$, and each being maximal
within the set $I'$ of rectangles: each $R'_i$ is in contact, on all
four sides, either with another rectangle of $I'$, or with the
boundary of $\B$.  Note too that the sides of each $R'_i$ lie on the
grid ${\cal G}$ of horizontal/vertical lines through sides of the
input rectangles $R_i$; the coordinates defining the rectangles $I'$
are a subset of the set of coordinates of the input rectangles ${\cal R}$.
See Figure~\ref{fig:maximal}.

\begin{figure*}[!ht]
	\centering
	\includegraphics[width=0.35\textwidth]{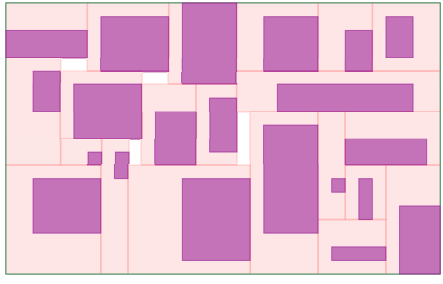}
	\caption{Example of the maximal expansions $R'_i$ (in light red) of
          a set $I$ of independent rectangles $R_i$ (in magenta).}
	\label{fig:maximal}
\end{figure*}

\medskip
\noindent{\bf The nesting relationship among maximal rectangles.}\quad
Consider a maximal rectangle, $R'_i\in I'$. By definition of
maximality, each of the four sides of $R'_i$ is in contact with the
boundary of $\B$ or with the boundary of another maximal rectangle,
$R'_j$. If $R'_i$ and $R'_j$ are in contact and the corresponding edge
of $R'_i$ lies within the {\em interior} of the corresponding edge of
$R'_j$, then we say that $R'_i$ is {\em nested} on that side where it
contacts $R'_j$. For example, if $R'_i$ is nested on its top side,
where it contacts the bottom side of $R'_j$, then the left side of
$R'_j$ is strictly left of the left side of $R'_i$, and the right side
of $R'_j$ is strictly right of the right side of $R'_i$.
For a rectangle $R'_i$ that is in contact with the boundary of $\B$,
we say that $R'_i$ is nested on a side that is contained in the
interior of a boundary edge of the bounding box $\B$.
We say that $R'_i$ is {\em nested vertically} if it is nested on its
top side or its bottom side (or both); similarly, we say that $R'_i$ is {\em
  nested horizontally} if it is nested on its left side or its right
side (or both).  Refer to Figure~\ref{fig:nested}. A simple observation that
comes from the definition, and the fact that rectangles in $I'$ do not
overlap, is the following:

%

\begin{observation}
  \label{obs:key}
For a set $I'$ of independent rectangles that are maximal within $\B$,
a rectangle $R'_i\in I'$ cannot be nested both vertically and horizontally.
\end{observation}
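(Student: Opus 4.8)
The plan is to argue by contradiction, using the defining geometric inequalities of "nested" together with the disjointness of rectangles in $I'$ and the fact that $I'$ fills $\B$ in a space-filling sense (every side of every $R'_i$ is in contact with another rectangle of $I'$ or with $\partial\B$). Suppose some $R'_i \in I'$ is nested both vertically and horizontally; say without loss of generality that it is nested on its top side and on its right side. (The other three combinations of a vertical side and a horizontal side are symmetric under reflection.) Nested on top means there is a rectangle (or boundary edge) $R'_j$ abutting the top of $R'_i$ whose horizontal extent strictly contains that of $R'_i$ — in particular, the left edge of $R'_j$ lies strictly left of the left edge of $R'_i$, and the right edge of $R'_j$ lies strictly right of the right edge of $R'_i$. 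Nested on the right means there is $R'_k$ abutting the right side of $R'_i$ whose vertical extent strictly contains that of $R'_i$ — in particular the top edge of $R'_k$ lies strictly above the top edge of $R'_i$.

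First I would set up coordinates: write $R'_i = [a,b]\times[c,d]$. Nested-on-top gives a rectangle $R'_j$ with $R'_j \cap (\{y=d\})$ a segment strictly containing $[a,b]$, so $R'_j$ occupies a horizontal strip $[a',b']\times[d,d']$ with $a' < a$ and $b' > b$ and $d' > d$. Nested-on-right gives $R'_k = [b,b'']\times[c',d'']$ with $b'' > b$, $c' < c$, and $d'' > d$. Now consider the point just above the northeast corner of $R'_i$ and just to the right of it — more carefully, a small rectangle $[b-\delta, b+\delta]\times[d-\delta, d+\delta]$ for small $\delta>0$. The key step is to locate a point in the open neighborhood of the corner $(b,d)$ that is forced to lie in the interior of $R'_j$ and also in the interior of $R'_k$, contradicting interior-disjointness. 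Since $b' > b$ and $d' > d$, the point $(b+\delta, d+\delta)$ lies in the interior of $R'_j$ for $\delta$ small enough; since $b'' > b$, $c' < c \le d$, and $d'' > d$, the point $(b+\delta, d+\delta)$ also lies in the interior of $R'_k$ for $\delta$ small enough. That is the contradiction.

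The one subtlety — and the step I expect to require the most care — is handling the cases where the contacting object is the boundary of $\B$ rather than another maximal rectangle, and checking that the strict inequalities in the definition of "nested" really do give an open two-dimensional overlap region rather than merely a shared boundary point. For the boundary case, "nested on the top side" means the top side of $R'_i$ lies in the interior of the top edge of $\B$; but then $R'_i$ cannot also be nested on the right, because its right side would have to lie in the interior of the right edge of $\B$, which is impossible for a rectangle whose top side is strictly interior to the top edge (the right side of $R'_i$ would have to reach the right edge of $\B$, yet its top endpoint is strictly below the top edge of $\B$, so it is the top-right corner of $\B$ that is missed — meaning the right side touches $\partial\B$ only in its interior is consistent, so here one instead contacts another rectangle $R'_k$ on the right, and the argument above applies with $R'_j$ replaced by the strip just below the top edge of $\B$). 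I would enumerate these sub-cases explicitly but briefly, noting that in every case the strict containment on one side combined with a contact on a perpendicular side produces a genuine open overlap near the shared corner, contradicting that $I'$ is an independent (interior-disjoint) family. Finally, I would remark that the "or both" clauses in the definitions of nested vertically/horizontally do not cause trouble: if $R'_i$ were nested on, say, both top and bottom, that is still just "nested vertically," and the contradiction is derived from any single vertical side together with any single horizontal side.
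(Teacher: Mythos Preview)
Your argument is correct and is essentially the same as the paper's: both pick a point just outside the shared corner (you use $(b+\delta,d+\delta)$ for small $\delta$; the paper uses the integrality assumption to take the specific shift $1/2$) and observe that it lies in the interiors of both abutting rectangles, contradicting disjointness. Your boundary-case paragraph is a bit tangled, but that case is in fact immediate: if the top of $R'_i$ lies on the top edge of $\B$ then no $R'_k\subseteq\B$ can have its top strictly above $R'_i$'s, and $R'_i$'s right side cannot also lie on $\partial\B$ since $b<x_{2n}$, so horizontal nesting is impossible.
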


\begin{figure*}[!ht]
	\centering
	\includegraphics[width=0.2\textwidth]{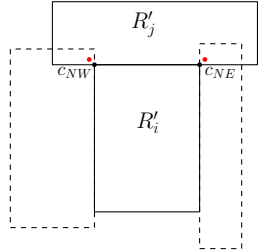}
	\caption{Proof of Observation~\ref{obs:key}: If $R'_i$ is
          nested vertically, with its top side being contained in the
          interior of the bottom side of $R'_j$, then the red points,
          which are slightly shifted top corners of $R'_i$, must be
          interior to $R'_j$. If $R'_i$ were also nested horizontally,
          abutting one of the dashed rectangles (of $I'$) to the
          left/right of $R'_i$, then we get a contradiction to the
          interior disjointness of rectangles, as a red point would be
          interior to two rectangles of $I'$.}
	\label{fig:key}
\end{figure*}

\begin{proof}
  Consider a rectangle $R'_i\in I'$ and assume that it is nested
  vertically, with its top side being a horizontal segment interior to
  the bottom side of a rectangle $R'_j\in I'$ that abuts $R'_i$ to its
  north. Let $c_{NE}=(x_{max},y_{max})$ be the northeast corner of
  $R'_i$; then, we know from the nesting relationship with $R'_j$, and the assumed integrality of
  coordinates defining the input rectangles, 
  that the shifted corner point
  $(x_{max}+1/2,y_{max}+1/2)$ is interior to $R'_j$. Similarly, the shifted corner point
  $(x_{min}-1/2,y_{max}+1/2)$, just northwest of the northwest
  corner, $c_{NW}$, of $R'_i$ is interior to $R'_j$.  However, if
  $R'_i$ were to be nested horizontally on its right side, then
  $(x_{max}+1/2,y_{max}+1/2)$ must be interior to the abutting
  rectangle to the right (and a similar statement holds if $R'_i$ were
  to be nested on its left side.) This a contradiction to the interior
  disjointness of the rectangles of~$I'$.
\end{proof}

\begin{figure*}[!ht]
	\centering
	\includegraphics[width=0.3\textwidth]{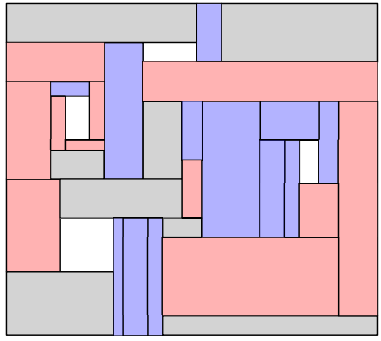}
	\caption{Example of maximal rectangles that are nested
          vertically (shown in blue) and nested horizontally (shown in
          red). Gray rectangles are not nested vertically or
          horizontally.}
	\label{fig:nested}
\end{figure*}

\medskip
\noindent{\bf CCR Partitions.}\quad
We define a recursive partitioning of a rectangle $B\subseteq \B$ as
follows.  At any stage of the decomposition, we have a partitioning of
$B$ into polygonal faces ({\em CCR faces}), each of which is a CCR
(meaning a rectangle or an L-shaped CCR). (Initially, there is a
single rectangular face,~$B$.)
A set, $\chi$, of axis-parallel line segments (called {\em horizontal
  cut segments} and {\em vertical cut segments}) on the grid ${\cal
  G}$ that partitions a CCR face $f$ into two or more CCR faces is
said to be a {\em CCR-cut} of $f$.  We often use $\sigma$ to refer to
a vertical cut segment of $\chi$, and we will consider $\sigma$ to be
a {\em maximal} segment that is contained in the union of grid edges
that make up $\chi$.
A CCR-cut $\chi$ is $K$-ary if it partitions $f$ into at most $K$ CCR
faces.  Necessarily, a $K$-ary CCR-cut $\chi$ consists of $O(K)$
horizontal/vertical cut segments, since each of the resulting $K$
subfaces are (constant-complexity, orthogonal) CCR faces.
By recursively making $K$-ary CCR-cuts, we obtain a $K$-ary hierarchical
partitioning of $B$ into CCR faces.  We refer to the resulting
(finite) recursive partitioning as a $K$-ary {\em CCR-partition} of $B$.
Associated with a CCR-partition is a {\em partition tree}, whose nodes
correspond to the CCR faces, at each stage of the recursive
partitioning, and whose edges connect a node for a face $f$ to each of
the nodes corresponding to the subfaces of $f$ into which $f$ is
partitioned by a $K$-ary CCR-cut.  The root of the tree corresponds to
$B$; at each level of the tree, a subset of the nodes at that level
have their corresponding CCR faces partitioned by a CCR-cut, into at
most $K$ subfaces that are children of the respective nodes.  The CCR
faces that correspond to leaves of the partition tree are called {\em
  leaf faces}.
We say that a CCR-partition is {\em perfect} with respect to a set of
rectangles if none of the rectangles have their interiors penetrated
by the CCR-cuts of the CCR-partition, and each leaf face of the
CCR-partition contains exactly one rectangle.  (This terminology
parallels the related notion of a ``perfect BSP'', mentioned below.)
We say that a CCR-partition is {\em nearly perfect} with respect to a
set of rectangles if for every CCR-cut in the hierarchical
partitioning each of the $O(1)$ cut segments of the CCR-cut penetrates
at most 2 rectangles of the set, and each leaf face of the
CCR-partition contains at most one rectangle.
(Having this property will enable a dynamic programming algorithm to
optimize the size of a set of rectangles, over all nearly perfect
partitions of the set, since there is only a constant amount of
information to store with each subproblem defined by a CCR face in the
CCR-partition.)

\medskip
\noindent{\bf Comparison with BSPs.}\quad
A binary space partition (BSP) is a recursive partitioning of a set of
objects using hyperplanes (lines in 2D); a BSP is said to be {\em
  perfect} if none of the objects are cut by the hyperplanes defining
the partitioning~\cite{de1997perfect}.  An orthogonal BSP uses cuts
that are orthogonal to the coordinate axes. There are sets of
axis-aligned rectangular objects for which there is no perfect
orthogonal BSP; see Figure~\ref{fig:bsp-vs-ccr}.
A CCR-partition is a generalization of an orthogonal binary space
partition (BSP), allowing more general shape faces (and cuts) in the
recursive partitioning, and allowing more than 2 (but still a constant
number of) children of internal nodes in the partition tree.

\begin{conjecture}\label{conj}
[Pach-Tardos~\cite{pach2000cutting}] For any set of $n$ interior-disjoint axis-aligned rectangles in the plane, there exists a subset of size $\Omega(n)$ that has a perfect orthogonal BSP.
\end{conjecture}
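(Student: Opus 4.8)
Since Conjecture~\ref{conj} is open, what follows is a plan of attack rather than a proof. The natural route is to \textbf{bootstrap from the structure theorem of this paper}: for any $n$ interior-disjoint rectangles that theorem produces a subset of size $\Omega(n)$ together with a $K$-ary CCR-partition that (nearly) respects it. An orthogonal BSP is exactly the special case of a CCR-partition in which every face is a rectangle, every cut is a single full chord of its face, every internal node is binary, and no retained rectangle is penetrated by a cut. So the plan is to apply the structure theorem and then run a bounded sequence of ``purification'' steps, each discarding only a constant fraction of the current subset, that convert the CCR-partition into a genuine perfect orthogonal BSP. Because we only need to keep an $\Omega(n)$-fraction in the end, losing one more constant factor at each of $O(1)$ steps is harmless.

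The purification steps, roughly in order, are: (i) \emph{nearly-perfect $\to$ perfect}: delete every retained rectangle penetrated by some cut segment; (ii) \emph{$K$-ary $\to$ binary}: since $K\le 3$, replace each CCR-cut by $O(1)$ successive binary cuts, which is essentially free; (iii) \emph{L-shaped CCR faces $\to$ rectangles}: split every L-shaped face by one additional axis-parallel cut through its reflex vertex, deleting any rectangle this cut penetrates; (iv) \emph{segment cuts $\to$ chord cuts}: extend each surviving cut segment to a full chord of its face and delete the rectangles the extension crosses. The subtle steps are (i), (iii), (iv): a priori a single such step could destroy the \emph{entire} subset --- for instance, in (i) there may be $\Theta(n)$ distinct cuts, each allowed to penetrate up to $2$ rectangles, so the naive count of deleted rectangles is $\Theta(n)$, not $o(n)$. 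Bounding these deletions is where the work lies; one would re-run the paper's charging argument in a stricter regime, using maximal expansions and the nesting classification of Observation~\ref{obs:key} to charge each deleted rectangle to an abutting maximal rectangle whose total budget is $O(1)$ --- \emph{without} permitting the $\le 2$ penetrations per cut segment that the dynamic program tolerates.

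The step I expect to be the \textbf{main obstacle} is (iv), together with the elimination of L-shaped faces in (iii); these are precisely the relaxations the CCR framework introduces because they seem to be forced. As already noted (Figure~\ref{fig:bsp-vs-ccr}), there are interior-disjoint rectangles admitting \emph{no} perfect orthogonal BSP at all, so deletions are genuinely unavoidable; the real question is whether they can be held to a constant fraction once every cut must be a full chord. Unlike a CCR-cut, a chord cannot stop early at a reflex vertex or weave around a blocking rectangle, so one ``bad'' cutting line can cross $\Theta(n)$ of the remaining rectangles. The plan therefore reduces to the following crux: in any maximal configuration of disjoint rectangles there is always an axis-parallel line through a side of some nested maximal rectangle that crosses only $O(1)$ of the rectangles; one then recurses, charging those $O(1)$ crossed rectangles to the chosen maximal rectangle, whose total charge remains $O(1)$ by the nesting structure. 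Proving that such a ``cheap'' cutting line exists at \emph{every} stage --- equivalently, ruling out configurations in which every full chord crosses many rectangles --- is the essential difficulty, and settling it would settle Conjecture~\ref{conj}.
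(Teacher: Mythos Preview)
You are right that Conjecture~\ref{conj} is open; the paper does not prove it and explicitly lists it among the unresolved questions in the conclusion. There is therefore no proof in the paper to compare against, and presenting a plan of attack rather than a proof is the appropriate response.

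The plan, however, is circular in the end, as you essentially concede in your last sentence: the ``crux'' you isolate --- that at every recursive stage some axis-parallel chord crosses only $O(1)$ of the current rectangles, with a charging scheme holding the cumulative loss below $(1-c)n$ --- is not a reduction of the conjecture but a restatement of it. More concretely, purification steps (iii) and (iv) undo exactly the two relaxations that make Theorem~\ref{thm:structure} provable in the first place. Lemma~\ref{lem:key} requires L-shaped faces and non-chord cuts precisely because a full chord of $Q$ is in general forced to cross a fence segment, and once a fence is crossed the charging argument of Section~\ref{sec:proof} breaks (the crossed rectangle need have no exposed neighbor to absorb the charge). So the CCR-partition you inherit from the structure theorem carries no information about which rectangles a full \emph{chord} would cross; it certifies only that certain \emph{CCR}-cuts are cheap. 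Step~(i) has the same defect at a coarser level: the ``nearly perfect'' slack of up to two penetrated rectangles per cut segment, summed over the $\Theta(k)$ internal nodes of the partition tree, already allows $\Theta(k)$ deletions with no evident target to charge them to. In short, bootstrapping from Theorem~\ref{thm:structure} does not visibly buy anything over attacking Conjecture~\ref{conj} directly.
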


\begin{figure*}[!ht]
	\centering
	\includegraphics[width=0.3\textwidth]{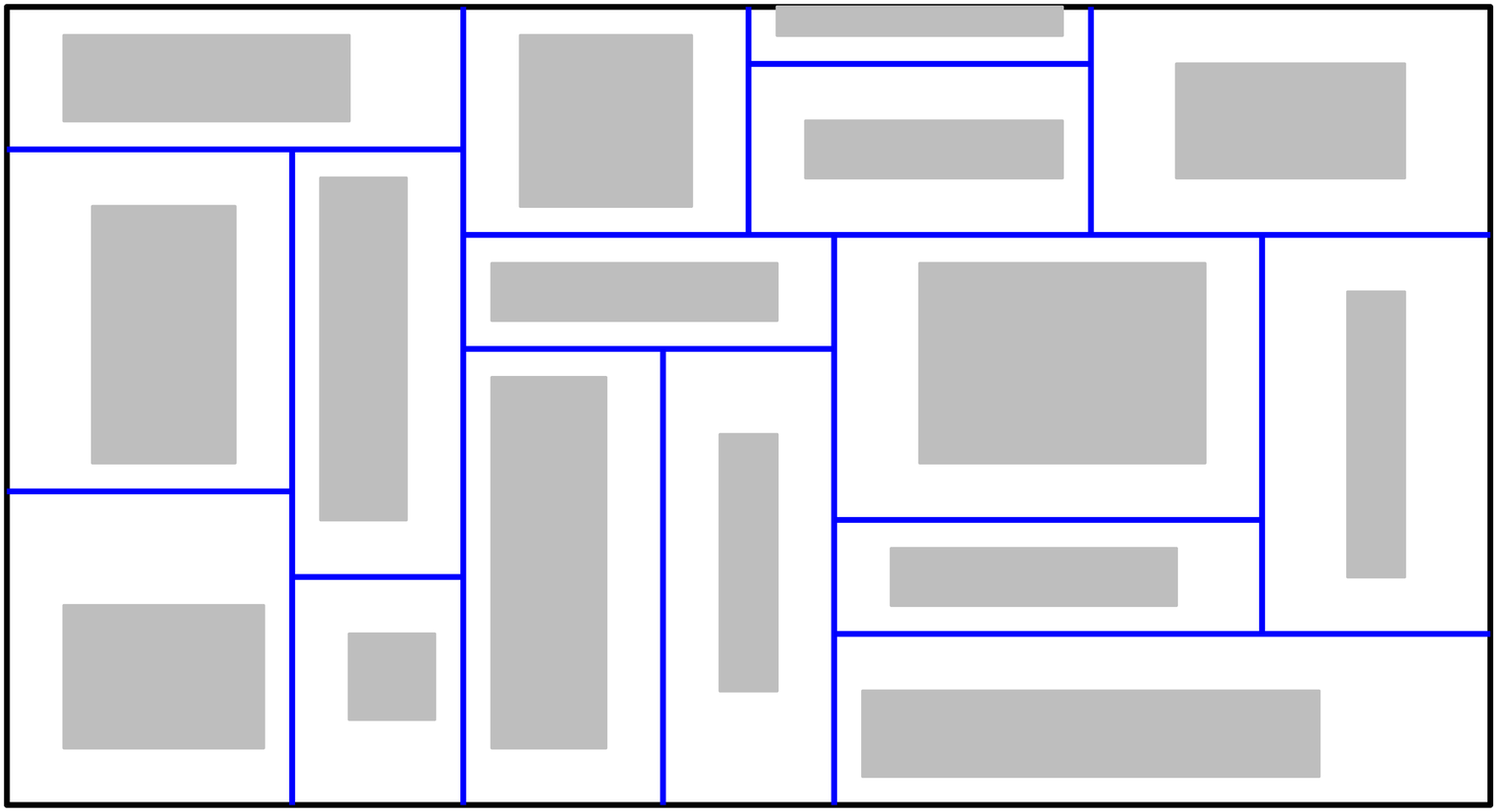}\hfill
       	\includegraphics[width=0.3\textwidth]{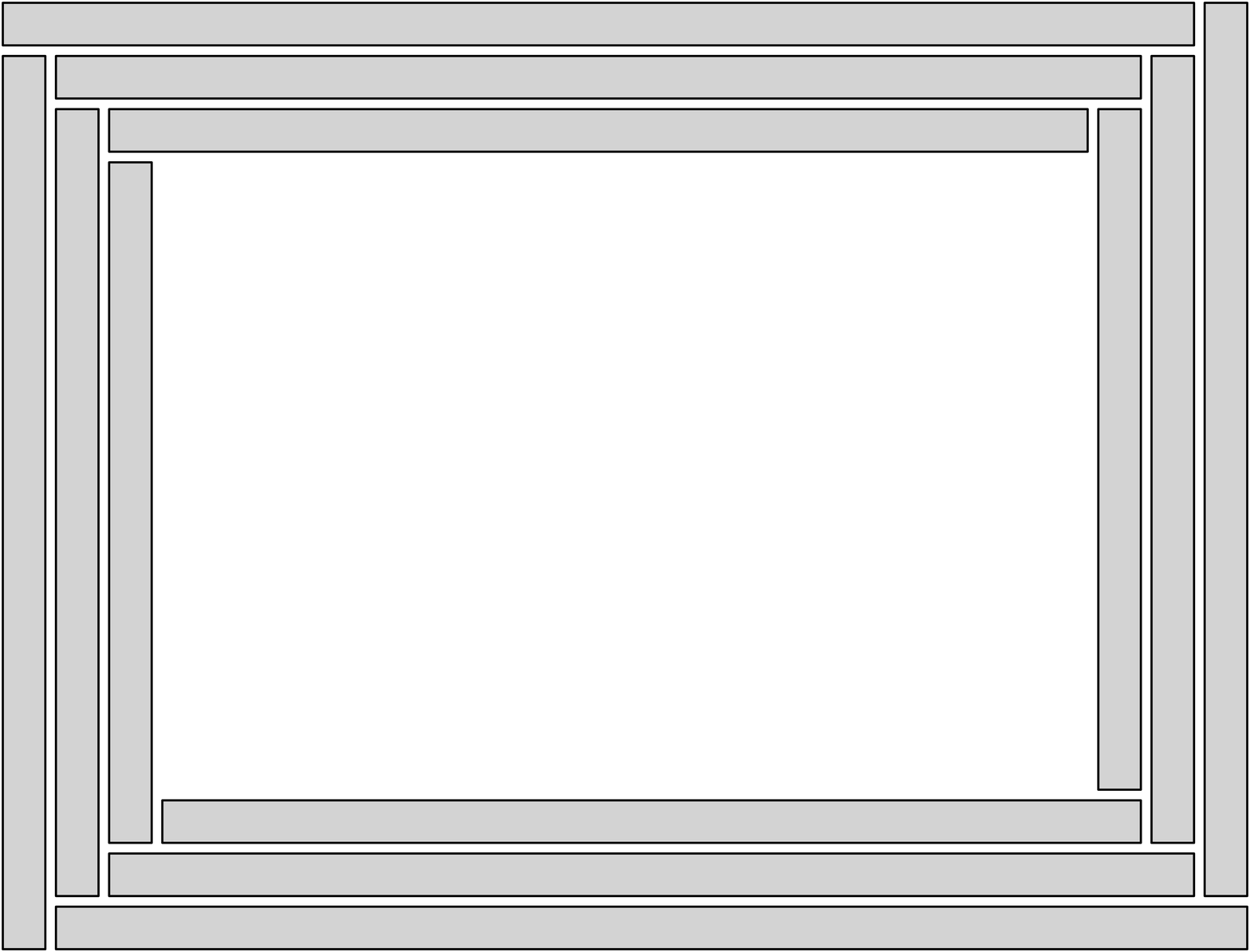}\hfill
        \includegraphics[width=0.3\textwidth]{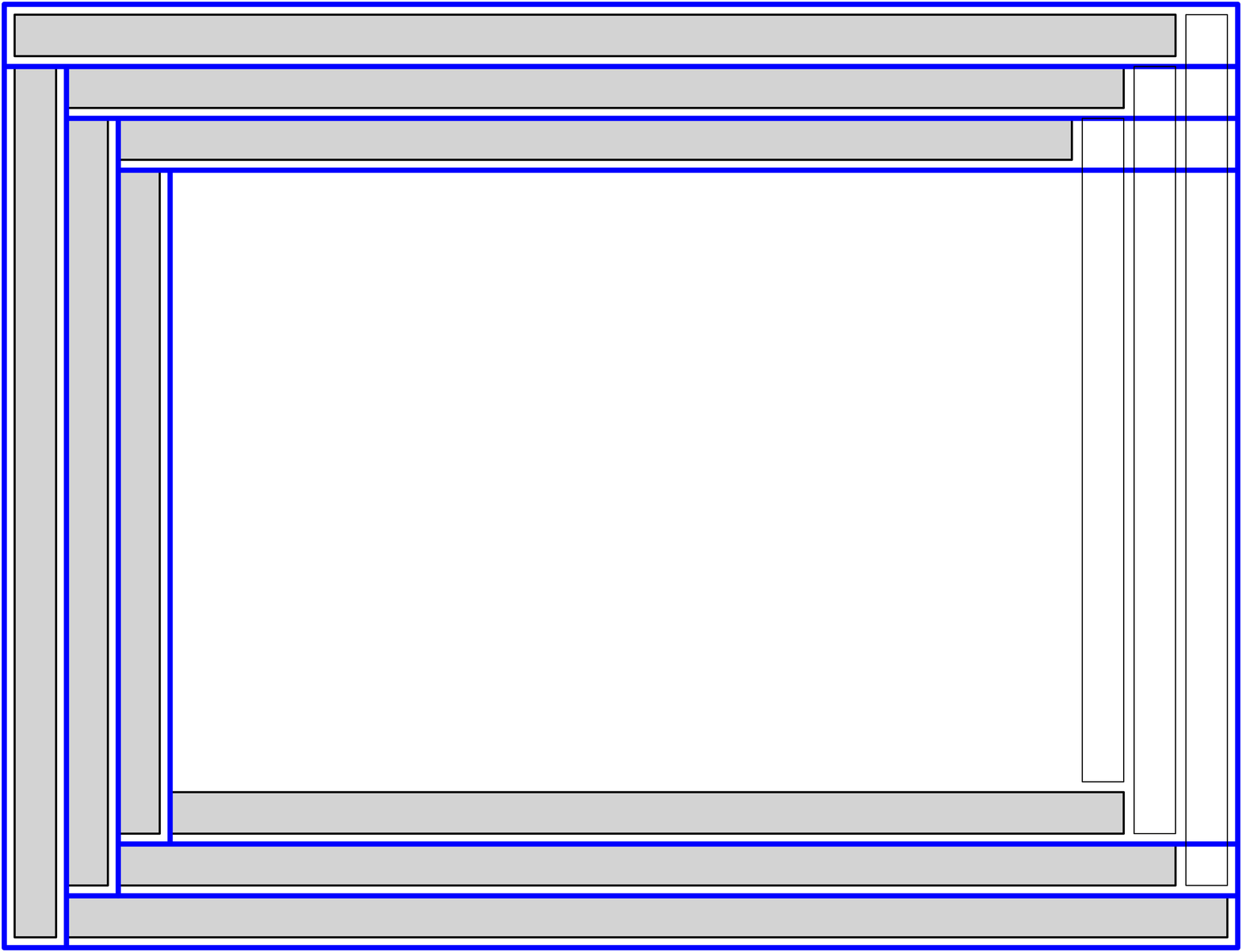}
	\caption{Left: A set of (gray) rectangles and a BSP that respects it. Middle: A set of (gray) rectangles that does not have a perfect BSP (but does have a perfect CCR-partition).
          Right: A perfect BSP partition of 3/4 of the rectangles in the middle figure (with 1/4 of the rectangles, in white, removed).}
	\label{fig:bsp-vs-ccr}
\end{figure*}

\section{The Structure Theorem}

Our main structure theorem states that for any set of $k$ interior disjoint
rectangles in the plane there is a constant fraction size 
subset with respect to which there exists a nearly perfect CCR-partition.
We state the theorem here and we utilize it for the algorithm in the next section;
the proof of the structure theorem is given in Section~\ref{sec:proof}.

While Conjecture~\ref{conj} remains open, and if true would imply a
constant-factor approximation for MISR (via dynamic programming), our
structural result suffices for obtaining a constant factor
approximation (Theorem~\ref{thm:main}), which is based on optimizing
over recursively defined CCR-partitions using dynamic programming.

\begin{theorem}\label{thm:structure}
  For any set $I=\{R_1,\ldots,R_k\}$ of $k$ interior disjoint
  (axis-aligned) rectangles in the plane within a bounding box $B$,
  there exists a $K$-ary CCR-partition of the bounding box $B$, with
  $K\leq 3$, recursively cutting $B$ into rectangles and (L-shaped) corner-clipped rectangles
  (CCRs), such that the CCR-partition is nearly perfect with
  respect to a subset of $I$ of size $\Omega(k)$. 
\end{theorem}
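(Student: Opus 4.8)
The plan is to prove the structure theorem by induction on $k = |I|$, at each step producing a $K$-ary CCR-cut of the current CCR face that (a) penetrates at most $2$ rectangles per cut segment and (b) splits the current subset of $I$ into at most $K\le 3$ roughly-balanced pieces, while discarding only a bounded-density fraction of rectangles that gets absorbed into the $\Omega(k)$ loss over the whole recursion. The heart of the argument is the \emph{single-cut lemma}: given a CCR face $f$ containing an independent set $I_f$, there is a CCR-cut of $f$ into at most $3$ CCR subfaces, penetrating $\le 2$ rectangles per segment, such that no subface contains more than (say) $\frac{3}{4}|I_f|$ of the surviving rectangles. Given such a lemma, the recursion tree has depth $O(\log k)$ and at each node we lose $O(1)$ rectangles per cut segment (times $O(K)$ segments), but to get a genuine $\Omega(k)$ bound one instead argues via a charging/accounting scheme that the \emph{total} number of rectangles ever penetrated is at most a constant fraction of $k$ --- this is where the "nearly respects" slack and the factor-$2$ allowance are essential.

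\textbf{To build the single cut, I would first pass to maximal expansions.} Replace $I_f$ by the set $I'_f$ of maximal expansions within $f$ (extending up, then left, then down, then right as in the Preliminaries), so that every rectangle abuts neighbors or $\partial f$ on all four sides and the nesting classification applies. By Observation~\ref{obs:key}, each maximal rectangle is nested at most in one of the two directions (vertical or horizontal) --- or in neither. A counting/pigeonhole step then shows that at least half of the maximal rectangles are, say, \emph{not nested vertically} (or symmetrically not nested horizontally); fix the majority direction, WLOG vertical cuts will do the work. The key structural payoff of "not nested vertically" is that such a rectangle $R'$ has its top (or bottom) side flush with a neighbor or with $\partial f$ in a way that lets a vertical line through (a perturbation of) $R'$'s side be \emph{extended} through the neighbor's interior only a bounded number of times before it reaches $\partial f$ --- i.e., a vertical cut segment can be routed along rectangle boundaries, jogging horizontally only where forced, penetrating at most $2$ input rectangles, and still separating $f$ into CCR (L-shaped) pieces. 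One then chooses, among the $\Omega(|I'_f|)$ candidate positions, a cut that is balanced (a median-type argument), yielding the $\le 3$-way split into CCRs.

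\textbf{The main obstacle} --- and the part that needs genuine care rather than routine bookkeeping --- is arguing that the cut segment can always be completed into a \emph{bona fide CCR-cut}: a set of axis-parallel grid segments partitioning the L-shaped face $f$ into at most $3$ faces that are themselves rectangles or single-corner-clipped L-shapes, while the horizontal jogs of the vertical segment each penetrate at most $2$ rectangles. A naive straight vertical cut through $f$ may hit many rectangles; a cut that weaves around every obstacle may create a non-convex, many-corner face. The resolution is to exploit the nesting structure: because the chosen rectangle is not nested vertically, the "staircase" of rectangles stacked above and below it is controlled --- the cut needs at most one or two horizontal deviations to get around the at-most-two rectangles it is allowed to penetrate, and these deviations can be absorbed by allowing the resulting face to be L-shaped (the one clipped corner). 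Verifying that the resulting pieces never need more than one clipped corner, under the simplified "at most one clipped corner" CCR definition adopted in this version, is exactly the "simplified case analysis" the introduction alludes to, and is where I expect the bulk of the work to lie.

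\textbf{Finally, the accounting.} I would set up the charging scheme promised in the introduction: charge each rectangle that a cut segment penetrates to one of the maximal expansions it abuts (using the $\ge \tfrac12$ fraction of maximal rectangles in the favorable nesting class as the "receivers"), arranging that each maximal rectangle is charged $O(1)$ times over the entire recursion. Since maximal expansions are in bijection with the original rectangles of the current subset, and since at each level we keep a $(1-\Omega(1))$ fraction, the surviving subset at the leaves has size $\Omega(k)$; declaring each leaf face to contain at most one rectangle (splitting off extras by trivial cuts, or discarding the $O(1)$ overflow, again charged) yields a nearly perfect CCR-partition of a subset of size $\Omega(k)$, as required. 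Plugging $K=3$ and the $O(1)$ penetration bound into the dynamic program of the next section then gives Theorem~\ref{thm:main}.
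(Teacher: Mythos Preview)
Your proposal has a genuine gap at its core: you are treating ``penetrates at most $2$ rectangles per cut segment'' as a property of the cut \emph{with respect to the full set} $I_f$, and you build everything around a single-cut lemma that finds a balanced CCR-cut meeting this bound. That is not what the paper proves, and in fact no such lemma holds in general. In the paper, a vertical cut segment $\sigma$ may \emph{cross} arbitrarily many rectangles of $I'$; all of those crossed rectangles are simply discarded. The ``at most $2$ penetrated per segment'' in the definition of nearly perfect applies only to the \emph{surviving} subset (the rectangles containing the two endpoints of $\sigma$ are penetrated but not crossed, hence kept). So your attempt to route a vertical segment that ``jogs horizontally only where forced, penetrating at most $2$ input rectangles'' is aiming at the wrong target, and the balance requirement is a red herring: the paper never seeks balanced cuts and the recursion-depth accounting you sketch is not used.

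What is actually missing from your outline is the mechanism that makes the charging work: \emph{fences}. The paper maintains, throughout the recursion, a set of horizontal ``fence'' segments anchored on the left/right sides of the current CCR, governed by a Fence Invariant (whenever a top/bottom side of a rectangle becomes exposed to the left/right, a fence is laid through it and through its left/right-neighbor). The key technical Lemma~\ref{lem:key} then shows that any CCR with such fences admits a CCR-cut into at most $3$ CCRs whose horizontal pieces lie on fences and whose (single) vertical segment crosses no fence. The charging scheme assigns each crossed non-red rectangle to corners of its left- and right-neighbors; the fence laid at the moment of charging is precisely what guarantees (Claims~\ref{fact2}--\ref{fact5}) that a charged rectangle is never itself crossed later and that no rectangle receives more than two $1/2$-charges. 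Without fences you have no way to enforce your promised ``each maximal rectangle is charged $O(1)$ times over the entire recursion,'' and the argument collapses. Your pigeonhole on the nesting classes and your instinct that a global charging scheme (not per-level balance) is needed are both right; but the proof requires inventing the fence invariant and proving the fence-respecting cut lemma, neither of which appears in your plan.
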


\section{The Algorithm}

The algorithm utilizes dynamic programming.  A subproblem ${\cal S}$
of the DP is specified by giving (1) a CCR $Q$ whose edges lie on the
grid ${\cal G}$ defined by the set of vertical/horizontal lines
through the sides of the input rectangles, and (2) a set $I_{\cal
  S}\subseteq {\cal R}$ of $O(1)$ input rectangles (these are
``special'' rectangles that we explicitly specify) that are penetrated
by some of the edges of $Q$, at most 2 per edge. 
Since $Q$ is either a rectangle or an L-shaped CCR, $Q$ has at most 6
edges/vertices, and we know that $|I_{\cal S}|\leq 12$ (a more careful count that exploits the structure of the cuts used in our CCR-partitions yields $|I_{\cal S}|\leq 6$) and that there are
only a polynomial number of possible CCRs.

For a subproblem ${\cal S}=(Q,I_{\cal S})$, let ${\cal R}({\cal
  S})\subseteq {\cal R}$ denote the subset of the input rectangles
that lie within $Q$ and are pairwise disjoint from the specified
rectangles $I_{\cal S}$.
Let $f({\cal S})$ denote the cardinality of a maximum-cardinality subset,
$I^*({\cal S})$, of ${\cal R}({\cal S})$ that is independent and for which there
exists a CCR-partition of $Q$ that is nearly perfect with respect to $I^*({\cal S})$.

The objective for the subproblem ${\cal S}$ specified by $(Q,I_{\cal S})$
is to compute a CCR-partitioning of $Q$ to maximize the
cardinality of the subset of ${\cal R}({\cal S})$ with respect to
which there is a CCR-partitioning of $Q$ that is nearly perfect.

The optimization is done by iterating over all candidate $K$-ary
($K\leq 3$) CCR-cuts of $Q$, together with choices of special
rectangles associated with the vertical cut segments of a CCR-cut,
resulting in the following recursion:
\begin{equation}
  f({\cal S})=\begin{cases}
0 & \mbox{if ${\cal R}({\cal S})=\emptyset$},\\
\max_{\chi\in\gamma({\cal S}), I_\chi} (f({\cal S}_1)+\cdots +f({\cal S}_K) + |I_\chi|) & \mbox{otherwise},
\end{cases}
\end{equation}
where the optimization is over the set, $\gamma({\cal S})$ of eligible
CCR-cuts within the grid ${\cal G}$ that partition $Q$ into $K$ CCRs
($2\leq K \leq 3$) and over the choices of the set $I_\chi \subseteq
I$ of $O(1)$ special rectangles, with at most 2 rectangles of $I_\chi$
penetrated by each cut segment of $\chi$.  The cut
$\chi\in \gamma({\cal S})$, together with the choice of $I_\chi$, results in a
set of $K$ subproblems ${\cal S}_1,\ldots,{\cal S}_K$, each specified
by a CCR, together with a set of specified special rectangles (a
subset of the $O(1)$ rectangles of $I_\chi$).  Since the eligible cuts
are of constant complexity within the grid of $O(n)$
vertical/horizontal lines, $|\gamma({\cal S})|$ is of polynomial
size. Also, there are a polynomial number of choices for the set
$I_\chi$, since each set is of constant size.  Thus, the algorithm runs in polynomial time.

The desired solution is given by evaluating $f({\cal S}_0)$, on the
subproblem ${\cal S}_0=(\B,{\cal R})$ defined by the full bounding
box, $\B$, of the input set ${\cal R}$ of rectangles.

The proof of correctness is based on mathematical induction on the
number of input rectangles of ${\cal R}$ that are contained in
subproblem ${\cal S}$. The base case is determined by subproblems
${\cal S}$ for which there is no rectangle of ${\cal R}$ inside the
CCR; we correctly set $f({\cal S})=0$ for such subproblems. With the
induction hypothesis that $f({\cal S})$ is correctly computed for
subproblems ${\cal S}$ containing at most $k$ rectangles of ${\cal
  R}$, we see that, for a subproblem ${\cal S}$ having $k+1$
rectangles of ${\cal R}$, since we optimize over all CCR-cuts
partitioning $Q$ into at most 3 CCRs, respecting $I_{\cal S}$, and
over all choices of $O(1)$ special rectangles penetrated by each
considered CCR-cut, with each of the values $f({\cal S}_i)$ being
correctly computed (by the induction hypothesis), we obtain a correct
value $f({\cal S})$. Since the algorithm runs in polynomial time, and
our structure theorem shows that there exists an independent subset of
${\cal R}$ of size at least $\Omega(k^*)$ 
for which there is a
CCR-partition respecting this subset, we obtain our main result:

\begin{theorem}\label{thm:main}
  There is a polynomial-time $O(1)$-approximation  
  algorithm for maximum
  independent set for a set of axis-aligned rectangles in the plane.
\end{theorem}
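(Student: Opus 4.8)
The plan is to combine the dynamic programming algorithm described above with the structure theorem (Theorem~\ref{thm:structure}), in the standard way that a structural decomposition theorem plus a matching dynamic program yields an approximation algorithm. The heart of the argument has two independent pieces: (i) the DP runs in polynomial time and correctly computes $f(\mathcal{S}_0)$, the maximum size of an independent subset of $\mathcal{R}$ admitting a nearly perfect CCR-partition of $\B$; and (ii) by Theorem~\ref{thm:structure}, this quantity is $\Omega(k^*)$, where $k^*$ is the size of a true maximum independent set. Putting these together, the DP outputs (by standard back-tracking through the recursion) an independent set of size $f(\mathcal{S}_0) = \Omega(k^*)$ in polynomial time, which is exactly the claimed $O(1)$-approximation.

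For piece (i), I would first argue that the number of distinct subproblems $\mathcal{S}=(Q,I_{\mathcal{S}})$ is polynomial: a CCR $Q$ with edges on the grid $\mathcal{G}$ is determined by $O(1)$ grid coordinates (at most $6$ vertices from $O(n)$ grid lines), so there are $n^{O(1)}$ choices of $Q$; and $|I_{\mathcal{S}}|\le 12$ is a constant-size subset of $\mathcal{R}$, giving another $n^{O(1)}$ factor. Next, for each subproblem the recursion optimizes over $\gamma(\mathcal{S})$, the set of eligible $K$-ary CCR-cuts with $K\le 3$; since such a cut is an $O(1)$-complexity orthogonal subdivision of $Q$ by segments on $\mathcal{G}$, it too is specified by $O(1)$ grid coordinates, so $|\gamma(\mathcal{S})|=n^{O(1)}$, and the choices of the $O(1)$ special rectangles $I_\chi$ contribute another $n^{O(1)}$; each evaluation of the right-hand side of the recurrence then touches only $O(1)$ child subproblems. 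Memoizing $f$ over all subproblems therefore takes total time $n^{O(1)}$. Correctness of $f$ follows by the induction on the number of input rectangles inside $\mathcal{S}$ sketched in the excerpt: the base case $\mathcal{R}(\mathcal{S})=\emptyset$ gives $f(\mathcal{S})=0$ correctly, and in the inductive step any optimal nearly perfect CCR-partition of $Q$ begins with some first CCR-cut $\chi$ and some associated set $I_\chi$ of penetrated rectangles (at most $2$ per cut segment) — this $(\chi, I_\chi)$ is one of the options considered — after which the partition restricted to each child CCR is itself a nearly perfect CCR-partition of that child on strictly fewer rectangles, so the induction hypothesis applies and the recurrence value matches the optimum.

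For piece (ii), I would simply invoke Theorem~\ref{thm:structure} applied to a fixed maximum independent set $\mathcal{R}^*$ with $|\mathcal{R}^*|=k^*$: it guarantees a $K$-ary CCR-partition of $\B$ (with $K\le 3$) that is nearly perfect with respect to a subset $\mathcal{R}'\subseteq \mathcal{R}^*$ of size $\Omega(k^*)$. Since $\mathcal{R}'$ is independent, is a subset of $\mathcal{R}$, and admits a nearly perfect CCR-partition of $\B$, it is a feasible solution to the optimization defining $f(\mathcal{S}_0)$; hence $f(\mathcal{S}_0)\ge |\mathcal{R}'| = \Omega(k^*)$. One small point to check carefully is that the DP's notion of ``eligible cut'' and ``at most $2$ penetrated rectangles per cut segment, $O(1)$ special rectangles'' is exactly broad enough to express every CCR-partition produced by Theorem~\ref{thm:structure} — i.e., that the $O(1)$-complexity cuts and the $|I_{\mathcal{S}}|\le 12$ (or $\le 6$) bound in the DP match the cuts and the near-perfectness guaranteed by the structure theorem; this alignment is where one must be precise, and it is the only real ``obstacle,'' though it is more a bookkeeping consistency check than a genuine difficulty. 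Combining $f(\mathcal{S}_0)=\Omega(k^*)$ with the polynomial running time and the fact that the DP returns an actual independent set realizing $f(\mathcal{S}_0)$ completes the proof of Theorem~\ref{thm:main}.
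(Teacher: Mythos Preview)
Your proposal is correct and follows essentially the same approach as the paper: combine the polynomial-time dynamic program over CCR subproblems (polynomially many subproblems and cuts, correctness by induction on the number of contained rectangles) with Theorem~\ref{thm:structure} applied to an optimal solution $\mathcal{R}^*$ to conclude $f(\mathcal{S}_0)\ge \Omega(k^*)$. Your explicit articulation of the consistency check between the DP's eligible cuts and the cuts produced by the structure theorem is a useful clarification, but otherwise the argument mirrors the paper's.
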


\noindent {\em Remark.} While efficiency is not the focus of our
attention here, other than to establish polynomial-time, we comment
briefly on the overall running time of the dynamic program. A naive
upper bound comes from the fact that there are at most $O(n^{6})$
L-shaped CCRs defined on the set of vertical/horizontal lines through sides of
the input rectangles, and there are at most $O(n^{6})$ specified rectangles $I_{\cal S}$ in any subproblem.
The most complex cut $\chi$ among the cases in the proof of Lemma~\ref{lem:key} is seen to have 3 horizontal cut segments and 1 vertical cut segment, with at most 2 associated specified rectangles along the vertical cut segment and at most additional specified rectangle along each of the horizontal cut segments; thus, there are at most $O(n^{9})$ choices for $(\chi,I_\chi)$. 
Thus, the
optimization is over at most $O(n^{9})$ choices, for each of at most $O(n^{12})$ subproblems.
The overall time bound of $O(n^{21})$ is surely an over-estimate.

\section{Proof of the Structure Theorem}
\label{sec:proof}

In this section we prove the main structure theorem,
Theorem~\ref{thm:structure}, which states that for any set
$I=\{R_1,\ldots,R_k\}$ of $k$ interior disjoint (axis-aligned)
rectangles in the plane within a bounding box $B$, there exists a
$K$-ary CCR-partition of the bounding box $B$, with $K\leq 3$,
recursively cutting $B$ into corner-clipped rectangles (CCRs), such
that the CCR-partition is nearly perfect with respect to a subset of
$I$ of size $\Omega(k)$ (specifically, we show size at least $k/4$).

  First, we consider the set $I'$ of maximal expansions, as described
  previously, of the $k$ input rectangles $I=\{R_1,\ldots,R_k\}$
  within $B$.  Recall that, appealing to Observation~\ref{obs:key},
  the set $I'$ is the disjoint union of subsets: $I'=I_h \cup I_v \cup
  I_0$, where $I_h$ is the subset of ``red'' rectangles that are
  nested horizontally, $I_v$ is the subset of ``blue'' rectangles that
  are nested vertically, and $I_0$ is the subset of ``gray''
  rectangles that are not nested on any of their four sides.  Refer to
  Figure~\ref{fig:nested}.  It cannot be that both $|I_h|$ and $|I_v|$
  are greater than $k/2$; thus, assume, without loss of generality,
  that $|I_h|\leq k/2$. Then, there are $|I_v|+|I_0| \geq k/2$
  ``non-red'' rectangles.\footnote{Throughout the discussion going
    forward in this section, the assumption that the number of red
    (maximal) rectangles is at most $k/2$ will permeate, leading to
    apparent ``bias'' in places where we construct ``fence'' segments
    that are horizontal, require constraints on vertical cut segments
    in Lemma~\ref{lem:key}, focus on charging off rectangles that are
    crossed by vertical cut segments, while horizontal cut segments
    never cross rectangles of $I'$, etc; in all of these places, there is a
    symmetric statement corresponding to the case in which, instead,
    the number of blue (maximal) rectangles is at most $k/2$, and
    there are at least $k/2$ ``non-blue'' rectangles.}

We now describe a process by which a subset of $\Omega(k)$ of the
rectangles of $I'$ is chosen, and a CCR-partition, $\Pi$, is
constructed that is nearly perfect with respect to this subset, and,
thus, by the following claim, is nearly perfect with respect to the
corresponding subset of (sub-)rectangles~$I$.

\begin{claim}\label{claim:1}
  If a CCR-partition is nearly perfect with respect to a set
  $A'\subset I'$ of maximal rectangles, then it is nearly perfect with
  respect to the corresponding set $A\subseteq I$ of input rectangles.
\end{claim}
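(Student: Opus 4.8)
\medskip
\noindent\emph{Proof plan for Claim~\ref{claim:1}.}\quad
The plan is to reduce everything to the single containment $R_i\subseteq R'_i$ (each input rectangle sits inside its maximal expansion), hence $\mathrm{int}(R_i)\subseteq\mathrm{int}(R'_i)$, and then to observe that \emph{both} conditions in the definition of ``nearly perfect'' are monotone under shrinking a rectangle: replacing $R'_i$ by the smaller $R_i$ can only decrease the number of set-rectangles that a given cut segment penetrates and the number of set-rectangles that a given leaf face holds. So the bounds ($\le 2$ per cut segment, $\le 1$ per leaf face) that hold for $A'$ are automatically inherited by $A$.

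Concretely, write $A=\{R_{i_1},\dots,R_{i_m}\}$ and $A'=\{R'_{i_1},\dots,R'_{i_m}\}$, and fix a CCR-partition $\Pi$ that is nearly perfect with respect to $A'$. First I would treat the cut segments: let $\sigma$ be any cut segment appearing in a CCR-cut of $\Pi$. If $\sigma$ penetrates $R_{i_j}$, i.e.\ meets $\mathrm{int}(R_{i_j})$, then it meets $\mathrm{int}(R'_{i_j})$ as well, so $\sigma$ penetrates $R'_{i_j}$. Thus the map $R_{i_j}\mapsto R'_{i_j}$ is an injection from the rectangles of $A$ penetrated by $\sigma$ into the rectangles of $A'$ penetrated by $\sigma$; the latter set has size at most $2$, so $\sigma$ penetrates at most $2$ rectangles of $A$.

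Next the leaf faces: interpreting ``a leaf face $f$ contains a rectangle $R$'' in the sense relevant to counting rectangles in (nearly) perfect partitions --- namely that $R$ and $f$ overlap in a region of positive area, equivalently $\mathrm{int}(f)\cap\mathrm{int}(R)\neq\emptyset$ --- suppose $f$ contains $R_{i_j}$. Since $\mathrm{int}(R_{i_j})\subseteq\mathrm{int}(R'_{i_j})$, we get $\mathrm{int}(f)\cap\mathrm{int}(R'_{i_j})\neq\emptyset$, so $f$ contains $R'_{i_j}$. Again $R_{i_j}\mapsto R'_{i_j}$ is injective, and $f$ contains at most one rectangle of $A'$, so $f$ contains at most one rectangle of $A$. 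Combining the two parts, $\Pi$ is nearly perfect with respect to $A$, which (as $A\subseteq I$) is exactly the assertion.

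I do not expect a genuine obstacle here: the whole argument is powered by the one containment, and the only thing to be careful about is the direction of the implication --- ``$R_i$ penetrated/held $\Rightarrow R'_i$ penetrated/held'', never the (false) converse --- which is precisely what is needed, since we are transferring \emph{upper} bounds down from the larger rectangles $A'$ to the smaller rectangles $A$. The one place a reader might want more detail is pinning down the exact meaning of ``a leaf face contains a rectangle'' (positive-area overlap versus literal set-containment); I would state it explicitly, since with the positive-area reading the injection above is immediate, whereas the literal-containment reading would force an extra, and in fact unnecessary, detour.
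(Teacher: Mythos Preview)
Your proposal is correct and follows exactly the paper's approach: the paper's proof is a single sentence observing that each $R\in I$ is a subrectangle of the corresponding $R'\in I'$, and you have simply spelled out in detail why this containment makes both defining conditions of ``nearly perfect'' inherit from $A'$ to $A$. Your added remark about the meaning of ``contains'' for leaf faces is a reasonable clarification but not something the paper addresses, since under either reading (literal set-containment or positive-area overlap) the implication goes through immediately from $R_i\subseteq R'_i$.
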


\begin{proof}
  This follows immediately from the fact that, for each $R'\in I'$,
  the (unique) corresponding $R\in I$ is a subrectangle of $R'$.
\end{proof}

Initially, all rectangles of $I'$ are {\em active}.
During the process of recursively partitioning $\B$ into a
CCR-partition, a subset of the rectangles of $I'$ will be discarded
and removed from active status.  In the end, the set of remaining
active rectangles of $I'$ have the property that the CCR-partition is
nearly perfect with respect to the remaining active rectangles.  We
show (Claim~\ref{fact7}), via a charging scheme, that at most a
constant fraction (3/4) of the rectangles of $I'$ are discarded,
implying that the final set of active rectangles of $I'$ has
cardinality at least $k/4=|I'|/4$.
%

At each stage in the recursive partitioning, a CCR $Q$ that contains
more than one rectangle of $I'$ (and thus more than one rectangle of
$I$) is partitioned via a cut $\chi$, within the grid ${\cal G}$, into
at most $K=3$ CCR faces, via a set of $O(1)$ horizontal and vertical
cut segments. Details of how we determine these CCR-cuts are presented
in the statement and proof of Lemma~\ref{lem:key}, below, after we
introduce the notion of ``fences''.

A property of the cuts $\chi$ we construct is that no rectangle of
$I'$ is crossed by a horizontal cut segment (Lemma~\ref{lem:key},
(iii), together with the property that the horizontal segments
(``fences'') that serve as input to the lemma are defined to have the
property that they do not cross any of the rectangles of $I'$);
horizontal cut segments can follow the boundary of a rectangle of
$I'$, and might penetrate up to two rectangles of $I'$ (namely, the rectangle(s) 
that contain each of the two endpoints of the segment), but cannot cross
any such rectangle.

A vertical cut segment, $\sigma$, of $\chi$ can penetrate and also
{\em cross} some rectangles of $I'$: we discard those rectangles that
are {\em crossed} by such a cut segment $\sigma$. On any vertical cut
segment at most two rectangles (the rectangle(s) containing each of
the two endpoints of $\sigma$) can be {\em penetrated} but not {\em
  crossed}; these rectangles will not be discarded and are the source
of the use of ``nearly perfect'' cuts in our method.  The ability to
find such cuts $\chi$ is assured by Lemma~\ref{lem:key}.  Those
rectangles that are crossed by a vertical cut segment and discarded
must be accounted for, to assure that a significant fraction remain in
the end.  This is where a charging scheme is utilized, which we
describe below, after we introduce fences.

\medskip
\noindent{\bf Fences.}\quad
An important aspect of the method we will describe for construction of
a CCR-partition that respects a large subset of $I'$ is the
enforcement of constraints that prevent the cuts we produce from
crossing rectangles of $I'$ without being able to ``pay'' for the crossing,
through our charging scheme (below).  We achieve this by establishing
certain {\em fences}, which are horizontal line segments, on the grid
${\cal G}$, each ``anchored'' on a left/right side of the current CCR
face, $Q$. The fences become constraints on CCR-cuts of $Q$, as we
require that a CCR-cut $\chi$ have no vertical cut segments that cross
any fence (and all horizontal cut segments will be contained within the
set of fence segments); refer to Lemma~\ref{lem:key} below.

For each of the 4 corners, $c_{NE}$, $c_{NW}$, $c_{SW}$, $c_{SE}$, of
a rectangle $R\in I'$, we let the corresponding {\em shifted} corner
(denoted $c'_{NE}$, $c'_{NW}$, $c'_{SW}$, $c'_{SE}$) be the point
interior to $R$, shifted towards the interior of $R$ by $1/2$ in both
$x$- and $y$-coordinates (recall that the defining coordinates of the
input rectangles are assumed, without loss of generality, to be
integers); for example, if $c_{NE}=(x,y)$, then $c'_{NE}=(x-1/2, y-1/2)$.
(There is nothing special about the number 1/2; any real
number strictly between 0 and 1 could be used to define shifted
corners for the rectangles, which have integer coordinates.)

We say that a corner, $c$, of a rectangle $R\in I'$, is
{\em exposed to the right within $Q$} if a horizontal rightwards ray,
$\rho$, with endpoint at its corresponding shifted corner, $c'$, does
not penetrate any rectangle of $I'$ (other than $R$) that lies fully
within $Q$ before reaching the right boundary of $Q$.
If a corner, $c$, of rectangle $R$ is
{\em not} exposed to the right within $Q$, then the rightwards ray
$\rho$ penetrates at least one rectangle that lies fully within $Q$
before reaching the right boundary of $Q$; the first such rectangle,
$R'$, that is penetrated by $\rho$ after $\rho$ exits $R$ is called
the {\em right-neighbor of $R$ at $c$}.
We similarly define the notion of a
corner $c$ being exposed to the left within $Q$ and the notion
of a left-neighbor of $R$ at $c$ (in the case that the corner $c$ is not exposed to the left). Note that, from the definition, the
northeast corner of $R$ is exposed to the right (resp., left) if and
only if the northwest corner of $R$ is exposed to the right (resp.,
left); a similar statement holds for southeast and southwest corners.
Thus, we say that the top side of $R$ is exposed to the
left (resp., right) if the northeast and northwest corners of $R$ are exposed
to the left (resp., right).
If the top side of $R$ is not exposed to the left (resp., right), then we speak of the top-left-neighbor of $R$ (resp., top-right-neighbor of $R$), which is the left-neighbor (resp., right-neighbor) of $R$ at a top corner of $R$.
Similarly, we speak of a bottom side of $R$ being exposed to the left/right, and, if not exposed, having a bottom-left-neighbor or bottom-right-neighbor. We use the term {\em left-neighbor} of a rectangle to refer to a top-left-neighbor or a bottom-left-neighbor; we similarly use the term {\em right-neighbor}.
%

When a CCR $Q$ is partitioned by a CCR-cut, $\chi$, a vertical cut
segment, $\sigma$, of $\chi$ may cause some top/bottom sides of other
rectangles of $I'$ within the new CCR faces (subfaces of $Q$ on either
side of $\sigma$) to become exposed to the left or to the right within
their respective new CCR faces: a leftwards/rightwards ray (from a shifted corner of
some rectangle) that previously penetrated a rectangle of $I'$ (within $Q$) now reaches $\sigma$,
a new vertical boundary of a subface, before the penetration.
When a vertical cut segment $\sigma$ of a CCR-cut causes the
top/bottom side of a rectangle $R\in I'$ that lies fully within a
subface to become exposed to its left (resp., right) within its new
CCR face, we establish a {\em fence} (horizontal line segment) that
contains the top/bottom side of $R$ and extends to the left (resp.,
right) boundary of the CCR face (at a point on $\sigma$ that gave rise
to the new exposure of $R$) and extends to the right (resp., left)
side of the corresponding right-neighbor (resp., left-neighbor) of the
rectangle~$R$.\footnote{In an earlier version of this
  paper~\cite{DBLP:journals/corr/abs-2101-00326}, the fence segment
  extended only to the right (or left) side of the newly exposed
  rectangle $R$, rather than continuing to the right (resp., left) side of
  the corresponding right-neighbor (resp., left-neighbor) of $R$.  By extending the fence through the left/right-neighbor,
  we are able to ensure (Claim~\ref{fact5}) that no rectangle has both
  its northwest and northeast corners charged (or both the southwest
  and southeast corners), as the extended fence ensures that the
  left/right-neighbor is not crossed by a cut.} See
Figure~\ref{fig:fence-extended}.

\begin{figure*}[!ht]
	\centering
	\includegraphics[width=0.7\textwidth]{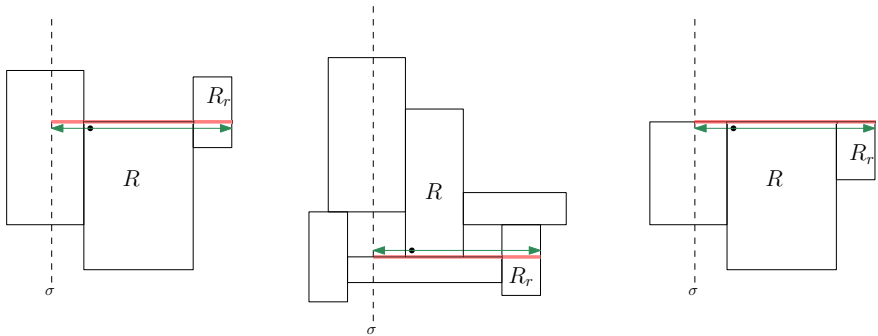}
	\caption{Examples showing the establishment of fences after a
          cut along vertical segment $\sigma$ causes the northwest
          corner (left and right cases) or the southwest corner
          (middle case) of $R$ to become exposed to the left. The newly
          established fence segment is shown in red, extending from
          $\sigma$ to the right side of $R_r$, the top-right-neighbor
          (left and right cases) or bottom-right-neighbor (middle
          case).}
        \label{fig:fence-extended}
\end{figure*}

Note that, by its definition, a fence segment does not cross any rectangle
of $I'$; it may penetrate a (left/right-neighbor) rectangle of $R\in I'$ near one of the two ends of the fence segment.
(Since an input rectangle can lie within the interior of its corresponding maximal rectangle in $I'$, this means that
the fence segment may cross an input rectangle that is contained
within one of the at most two penetrated maximal rectangles of $I'$.)
Refer to Figure~\ref{fig:fences}; we use the convention that fences
are shown in red (resp., blue) if anchored on their left (resp.,
right) endpoint.  (This color choice of red/blue for fences has
nothing to do with the ``red'' and ``blue'' color terminology we had
above for rectangles in the set $I_h$ and $I_v$.)
As cuts are made during the process, we establish fences in order to maintain the following
invariant during the course of the recursive partitioning:

\begin{quotation}
\noindent [Fence Invariant] {\em For each rectangle $R\in I'$ that
  lies fully within a CCR $Q$, if the top/bottom side of $R$ is exposed to the left
  (resp., right) within $Q$, then there is a (horizontal) fence
  segment $\alpha$ (resp., $\beta$) that includes the top/bottom side of $R$
  and extends leftwards to the left side of $Q$ and rightwards to the
  right side of the top/bottom-right-neighbor of $R$ (resp., extends rightwards
  to the right side of $Q$ and leftwards to the left side of the
  top/bottom-left-neighbor of $R$).}
  \end{quotation}

\begin{figure*}[!b]
	\centering
	\includegraphics[width=\textwidth]{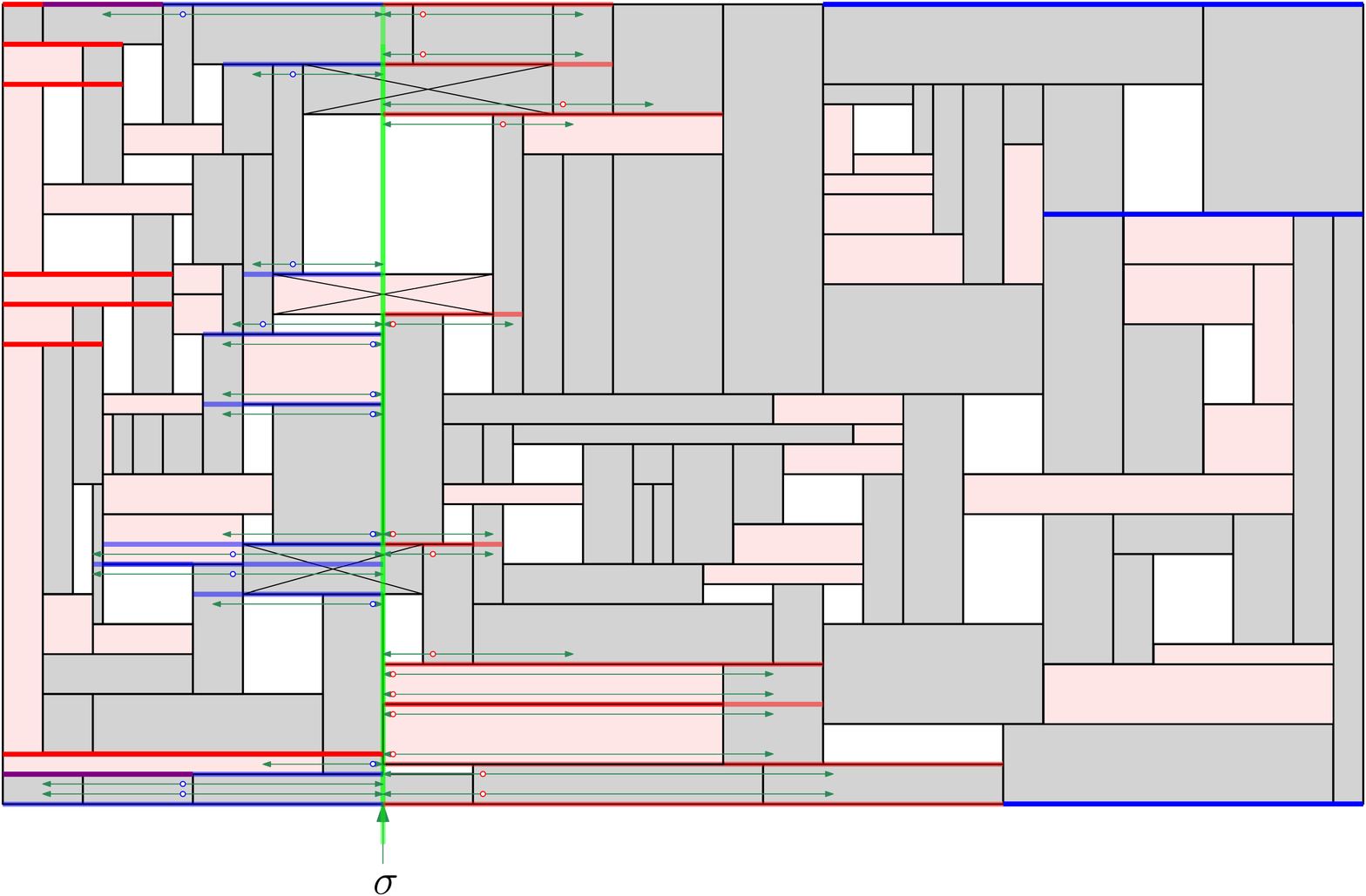}
	\caption{Example showing the establishment of fences after a
          cut. In this figure, the CCR is a rectangle, the maximal
          rectangles are shown, with red ones (nested to the left or
          to the right) shown in light red, and all non-red rectangles
          shown in gray. A vertical cut segment $\sigma$ (in green) is
          shown. The cut results in the top/bottom sides of some
          rectangles becoming exposed; we show the respective shifted
          corners as blue/red hollow dots, with dark green arrows
          showing the exposure to the cut $\sigma$, as well as showing
          the additional penetrated neighboring rectangle, which
          defines the extent of the fence segment. Rectangles that are
          {\em crossed} by the cut are marked with an ``X'': (Note
          that the first and last rectangles intersected by $\sigma$
          are penetrated but not crossed by $\sigma$.)  All of the
          newly established fences, anchored on $\sigma$, are shown in
          red (if the left endpoint is anchored on $\sigma$) or in
          blue (if the right endpoint is anchored on $\sigma$).}
	\label{fig:fences}
\end{figure*}

We say that a rectangle $R\in I'$, with $R$ fully contained within a CCR
$Q$, is {\em grounded} if its top or bottom side is
contained within the top or bottom boundary of $Q$ or within a fence
segment within $Q$, or if $R$ is penetrated (but not crossed) by a fence segment that extends from one vertical side of $R$ to the other.
Lemma~\ref{lem:key} guarantees that the CCR-cuts we utilize do
not have any vertical cut segments $\sigma$ that cross any fences and,
therefore, do not cross any grounded rectangles (since a vertical
segment that crosses a rectangle necessarily crosses its top and
bottom sides and any horizontal segment that penetrates the rectangle, extending from one vertical side to the other). It can be, however, that a vertical cut segment
$\sigma$ penetrates (without crossing) a grounded rectangle, e.g., if
the segment $\sigma$ terminates on a fence or on a top/bottom side of
$Q$; however, a vertical cut segment $\sigma$ can penetrate, without
crossing, at most two grounded rectangles, one containing the top
endpoint of $\sigma$ and one containing the bottom endpoint of $\sigma$. (This is
the source of the ``2'' in the definition of a nearly perfect cut.)
Each rectangle $R\in I'$ is a maximal expansion of an original input
rectangle of $I$; since the original rectangle is a subrectangle of
$R$, it can happen that a vertical cut segment $\sigma$ crosses the
original rectangle (or it could completely miss intersecting it)
associated with a grounded rectangle $R\in I'$, while not {\em
  crossing}, but only {\em penetrating},~$R$.

Lemma~\ref{lem:key} below shows that the recursive partitioning
process is feasible, showing that for any CCR face $Q$ and any set of
(horizontal) fences anchored on the left/right sides of $Q$, it is
always possible to find a CCR-cut with the properties claimed in the
lemma; in particular, the cut should ``respect'' the given set of
fences, in that no vertical cut segment of the cut crosses a fence
segment.  We initialize the set of fences in order that the
Fence Invariant holds at the beginning of the process:
for each rectangle $R\in I'$ whose top/bottom side is exposed to the left
within $B$, we establish a fence along its top/bottom side, anchored on the
left side of $B$, extending rightwards to the right side of the
(top/bottom)-right-neighbor of $R$. We do similarly
for rectangles whose top/bottom side is exposed to the right within $B$.  See
Figure~\ref{fig:nested-fences}.
(If the top/bottom side of $R$ is contained within the
top/bottom side of $B$, we need not establish a fence (though we do show them in the figure), since the fence
segment would be contained in the top/bottom side of $B$.) If there exists a rectangle $R\in I'$
whose top/bottom side is exposed both to the left and to the right within
$B$, then we simply cut $B$ with a horizontal cut segment along the
top/bottom side of $R$, from the left side of $B$ to the right side of $B$,
since this horizontal segment penetrates no rectangles of~$I'$.

\begin{figure*}[!ht]
	\centering
	\includegraphics[width=0.35\textwidth]{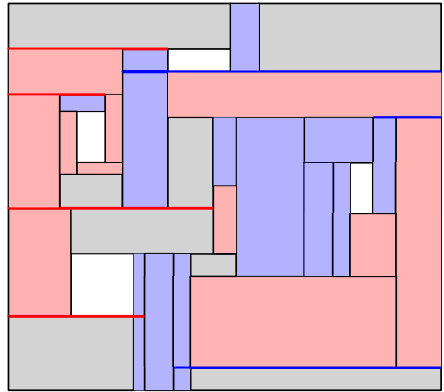}
	\caption{The initial fences, shown as bold red/blue segments, for the shown set of maximal rectangles within a rectangle $B$.}   
	\label{fig:nested-fences}
\end{figure*}

We appeal to the following technical lemma, whose proof is
based on a simple enumeration of cases, given in the appendix.

  \begin{lemma}\label{lem:key}
    Let $Q$ be a CCR whose edges
    lie on a grid lines of ${\cal G}$.
    Let $\{\alpha_1,\ldots,\alpha_{k_\alpha}\}$ be a set of $k_\alpha\geq 0$ ``red''
    horizontal anchored (grid) segments within $Q$, that are anchored
    with left endpoints on the left side of $Q$, and let $\{\beta_1,\ldots,\beta_{k_\beta}\}$ be
    a set of $k_\beta\geq 0$``blue'' horizontal anchored (grid) segments within $Q$,
    that are anchored with right endpoints on the right side of $Q$.  Then, assuming that $Q$ contains at
    least two grid cells (faces of ${\cal G}$), there exists a CCR-cut
    $\chi$ with the following properties:
    \begin{description}
    \item[(i)] $\chi$ partitions $Q$ into $O(1)$ (at most 3) CCR faces;
    \item[(ii)] $\chi$ is comprised of $O(1)$ horizontal/vertical
      segments on the grid ${\cal G}$, with endpoints on the grid;
    \item[(iii)] horizontal cut segments of $\chi$ are a subset of the
      given red/blue anchored segments;
    \item[(iv)] vertical cut segments of $\chi$ do not cross any of
      the given red/blue anchored segments;
    \item[(v)] there is at most 1 vertical cut segment of $\chi$.
      \end{description}
  \end{lemma}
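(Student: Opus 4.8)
\noindent\textbf{Proof plan for Lemma~\ref{lem:key}.}
The plan is to prove the lemma by a finite enumeration of cases, organized by (a) whether $Q$ is a rectangle or an L-shaped CCR and (b) how the red/blue anchored segments constrain the interior vertical grid lines of $Q$. Call an interior vertical grid line of $Q$ \emph{blocked} if it is crossed (interior-to-both) by some anchored segment; since red segments are anchored on the left side of $Q$ and blue segments on the right side, the line $x=X$ is blocked iff $X$ is strictly left of the right endpoint of some $\alpha_i$ or strictly right of the left endpoint of some $\beta_j$. Let $a^{*}$ be the largest $x$-coordinate of a right endpoint of a red segment ($a^{*}=-\infty$ if $k_\alpha=0$) and $b^{*}$ the smallest $x$-coordinate of a left endpoint of a blue segment ($b^{*}=+\infty$ if $k_\beta=0$); then an interior vertical grid line $x=X$ is unblocked exactly when $a^{*}\le X\le b^{*}$.

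I first clear the easy cases, each producing a $\chi$ with at most one cut segment and at most two CCR faces. (1) If at some interior height $y$ the red and blue anchored segments at height $y$ together span $Q$ (in particular if a single $\alpha_i$ or $\beta_j$ runs from the left side of $Q$ to the right side), let $\chi$ be the maximal segment of $Q$ at height $y$; it lies in the union of the given segments, has no vertical part, and a single horizontal cut turns a rectangle into two rectangles and an L-shaped CCR into a rectangle plus a CCR. (2) Otherwise, if $Q$ has an unblocked interior vertical grid line, let $\chi$ be the maximal segment of that line inside $Q$; by the characterization above it crosses no anchored segment, and a single vertical cut likewise yields at most two CCR faces. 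There remains the case that $Q$ has an interior vertical grid line but none is unblocked and no spanning segment exists; then $a^{*}>b^{*}$, because the line $x=a^{*}$ is an interior grid line of $Q$ (no red segment spans $Q$) that is blocked yet crossed by no red segment, hence is crossed by a blue segment. (If $Q$ has no interior vertical grid line at all then $Q$ is one grid column wide, so every anchored segment spans $Q$ and case (1) applies — in our recursive use, since a CCR is cut only while it still contains more than one rectangle of $I'$, the Fence Invariant, or the rule cutting immediately along a top/bottom side exposed on both sides, supplies such a spanning segment.)

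In the remaining ``overlap'' case ($a^{*}>b^{*}$) I build a compound cut with exactly one vertical segment. Let $\alpha$ realize $a^{*}$, at height $y_\alpha$; since no segment spans $Q$ at height $y_\alpha$, no blue segment at that height reaches $x=a^{*}$, so every blue segment crossing $x=a^{*}$ (at least one, since $b^{*}<a^{*}$) lies strictly above or strictly below $y_\alpha$. Let $\beta'$ (resp.\ $\beta''$) be the lowest (resp.\ highest) such blue segment above (resp.\ below) $y_\alpha$, if one exists. Put the unique vertical cut segment $\sigma$ on the line $x=a^{*}$, running from the height of $\beta''$ (or the bottom side of $Q$ if $\beta''$ is absent) up to the height of $\beta'$ (or the top side of $Q$), and set $\chi=\sigma\cup\alpha\cup(\beta'\text{ to the right of }x=a^{*})\cup(\beta''\text{ to the right of }x=a^{*})$. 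Since $\sigma$ meets $\alpha,\beta',\beta''$ only at endpoints, it crosses no anchored segment; the horizontal parts of $\chi$ are subsegments of the given segments; and $\chi$ has at most three horizontal parts and one vertical part, so (ii)--(v) hold. For a rectangular $Q$, $\chi$ is a tree meeting $\partial Q$ in exactly three points, and one checks directly that the three faces it bounds are CCRs, giving (i); this realizes the most complex cut pattern mentioned in the remark after Theorem~\ref{thm:main} (three horizontal cut segments and one vertical cut segment).

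The main obstacle is the L-shaped case. Cases (1) and (2) carry over verbatim, but in the overlap case the compound cut can fail: if the column $x=a^{*}$ of $\sigma$ lies strictly to the left of the reflex column of $Q$ and $\sigma$ is capped both above and below by blue segments, then the ``top'' face — which contains the reflex vertex of $Q$ and also picks up the step where $\alpha$ meets $\sigma$ — has two reflex vertices and is not a CCR. To handle this I would refine the enumeration: when the given segments permit, first split $Q$ into two rectangles by a single cut through its reflex vertex that respects the fences (along the vertical grid line through the reflex vertex when it is unblocked, or along an anchored segment containing the horizontal edge incident to the reflex vertex), then apply the rectangular analysis to each piece; and otherwise choose the column of $\sigma$ — either $x=a^{*}$, or, by the mirror construction that turns off red segments, $x=b^{*}$ — together with the vertical extent of $\sigma$, so that the reflex vertex of $Q$ is inherited by a single face that is cut into a rectangle and one L-shaped CCR while the other side is cut only into rectangles. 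Verifying that at least one of these options is always available (or, failing that, that the bad configuration cannot occur for the CCR faces that actually arise in the recursive construction) is the delicate part of the proof; it should exploit that above the height of the reflex vertex the region $Q$ is only as wide as the narrow arm of the L, so the anchored segments there are short.
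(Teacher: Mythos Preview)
Your treatment of the rectangle case is sound, and in fact a bit more elaborate than needed: the paper, in its case~(1), simply takes the blue fence $\beta$ with leftmost left endpoint, shoots a vertical ray from that endpoint until it hits a red fence $\alpha$ (which must exist in the overlap situation, since no other blue crosses that column), and makes a ``Z'' cut along $\beta$, the short vertical, and a portion of $\alpha$, yielding just two L-shaped faces. Your three-piece cut at $x=a^{*}$ also works, but the Z-cut is cleaner.

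The genuine gap is the L-shaped case, and it is not merely a matter of bookkeeping. Your two proposed escape routes---cutting through the reflex vertex, or placing $\sigma$ at $x=a^{*}$ or $x=b^{*}$ with a judicious vertical extent---can all fail simultaneously. Take $Q=[0,10]\times[0,10]\setminus[5,10]\times[5,10]$ (reflex vertex $v=(5,5)$), with a red fence $\alpha$ at height~$2$ from $x=0$ to $x=8$, and blue fences at heights~$1$ and~$3$, each from $x=2$ to $x=10$. The line $x=5$ is blocked by all three fences, and no fence lies at height~$5$, so your ``split through the reflex vertex'' option is unavailable. At $x=a^{*}=8$, any $\sigma$ respecting the fences is trapped between the two blue fences; the resulting upper face then carries both the reflex vertex $(5,5)$ of $Q$ and a new reflex vertex at $(8,3)$, so it is not a CCR. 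At $x=b^{*}=2$, every choice of vertical extent either crosses $\alpha$ or produces an upper face with reflex vertices at both $(5,5)$ and $(2,2)$. So none of your options yields a valid CCR-cut here.

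The idea you are missing is to organize the case analysis around the reflex vertex rather than around the extremal fence endpoints. The paper forms the vertical decomposition of $Q$ with respect to all the fences (and a vertical segment through $v$), and lets $C$ be the cell containing a point $v'$ just northwest of $v$. The top of $C$ is either the top of $Q$ or a blue fence anchored on the upper right side $r^{+}$; the bottom of $C$ is the bottom of $Q$, a blue fence, or a red fence. This gives six cases, and in each one an explicit cut of at most one vertical segment, lying along the left side of $C$ (or of a cell reached from $C$ by walking leftward in the dual graph, in the one tricky case where both the top and bottom of $C$ lie on blue fences), produces at most three CCR faces. In the example above, $C=[2,5]\times[3,10]$ has its top on $t$ and its bottom on the blue fence at height~$3$; the paper's case~(3) cuts vertically along $x=2$ from the top of $Q$ down to that fence and then right along the fence, yielding two L-shaped faces.
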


The lemma shows that a CCR, $Q$, with anchored (grid) segments has a
CCR-cut, on the grid, partitioning $Q$ into at most 3 CCRs (CCR
faces in the grid). Applying this recursively (and finitely, due to
the finiteness of the grid), until there is at most one rectangle of $I'$
within each CCR face, we will be able to conclude, based on the 
charging scheme below, the proof of the claimed structure theorem.

\medskip

\noindent {\em Remark.} There is an equivalent ``rotated'' version of
Lemma~\ref{lem:key} that applies to a set of vertical (fence) segments
anchored on the top/bottom sides of $Q$, with vertical cut segments
lying along the anchored segments, and at most 1 horizontal cut segment not
crossing any anchored (vertical) segments. The rotated version applies in our
proof of the structure theorem in the case that the non-blue maximal
rectangles have cardinality at least $k/2$.

\medskip
\noindent{\bf Charging Scheme.}\quad
Consider a vertical cut segment, $\sigma$, that is part of a CCR-cut
$\chi$ of a CCR $Q$, given by Lemma~\ref{lem:key}.  The segment
$\sigma$ penetrates some (possibly empty) subset, $I'_\sigma$, of
rectangles of $I'$ that lie within $Q$; the intersection of $\sigma$
with the rectangles $I'_\sigma$ is a set of (interior-disjoint)
subsegments of $\sigma$, at most two of which share an endpoint with
$\sigma$ and are thus {\em not} contained within the interior of
$\sigma$. Thus, at most two of the rectangles of $I'_\sigma$ are not
{\em crossed} by $\sigma$.

\begin{claim}
  \label{fact1}
 If a rectangle of $I'_\sigma$ is crossed by $\sigma$, then neither
 its top nor its bottom is exposed to the left or right within $Q$.
\end{claim}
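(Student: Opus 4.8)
If a rectangle $R\in I'_\sigma$ is crossed by $\sigma$, then neither its top nor its bottom side is exposed to the left or right within $Q$.

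\textbf{Proof plan.} The plan is to derive a contradiction from the Fence Invariant together with property (iv) of Lemma~\ref{lem:key}. Suppose $R\in I'_\sigma$ is crossed by $\sigma$ and, for contradiction, that the top side of $R$ is exposed to the left within $Q$; the remaining three cases (top exposed to the right, bottom exposed to the left, bottom exposed to the right) follow by the same argument, with blue fences in place of red and/or the bottom side in place of the top. Recall that, by definition of $I'_\sigma$, the rectangle $R$ lies fully within $Q$, so the Fence Invariant applies to $R$.

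First I would unpack the crossing hypothesis. Since $\sigma$ crosses $R$, the intersection $\sigma\cap R$ is a subsegment of $\sigma$ whose two endpoints lie in the interior of $\sigma$; let $p$ be its upper endpoint, so $p$ lies on the top side of $R$. Because $\sigma$ penetrates the interior of $R$, the $x$-coordinate of $\sigma$ lies strictly between the $x$-coordinates of the left and right sides of $R$, hence $p$ is an interior point of the top side of $R$. Moreover $p$ is an interior point of $\sigma$, so $\sigma$ continues strictly above $p$ while staying inside $Q$; consequently $p$ is not on the top boundary of $Q$, and therefore the top side of $R$ is not contained in the top side of $Q$ (so the degenerate ``no fence needed'' case is excluded).

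Next, since $R$ lies fully within $Q$, its top side is exposed to the left within $Q$, and its top side is not part of $\partial Q$, the Fence Invariant provides a (red) fence segment $\alpha$ anchored with its left endpoint on the left side of $Q$, containing the top side of $R$, and extending rightward to the right side of the top-right-neighbor of $R$. In particular $\alpha$ properly contains the top side of $R$ — it extends at least to the left side of $Q$, which is strictly left of the left side of $R$ — so the point $p$, being interior to the top side of $R$, is interior to $\alpha$. Thus the vertical cut segment $\sigma$ meets the horizontal segment $\alpha$ at $p$, a point interior to both, i.e. $\sigma$ crosses $\alpha$. But $\alpha$ is one of the anchored fence segments supplied to Lemma~\ref{lem:key} when $Q$ is cut by $\chi$, and property (iv) of that lemma asserts that no vertical cut segment of $\chi$ crosses any anchored segment. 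This contradiction completes the case, and hence the claim.

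The argument is short, and I do not expect a genuine obstacle; the only points needing care are the degenerate-position bookkeeping, namely confirming that a bona fide fence segment exists (which I get from the fact that, as $R$ is \emph{crossed} rather than merely penetrated, the top side of $R$ cannot lie on $\partial Q$) and checking that the crossing point $p$ is interior to both $\sigma$ and $\alpha$ rather than an endpoint of either.
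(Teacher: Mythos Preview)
Your argument is correct and follows exactly the approach in the paper: invoke the Fence Invariant to obtain a fence along the exposed top/bottom side of $R$, then derive a contradiction with property (iv) of Lemma~\ref{lem:key}, which forbids vertical cut segments from crossing fences. The paper's proof is a one-line reference to these two ingredients; your version simply spells out the crossing-point bookkeeping in more detail.
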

\begin{proof}
This is immediate from the Fence Invariant, and the fact
(Lemma~\ref{lem:key}, part (iv)) that vertical cut segments do not
cross fences.
\end{proof}

Our charging scheme assigns (``charges off'') each {\em non-red}
rectangle $R$ that is crossed by a vertical cut segment $\sigma$ to
two corners of rectangles of $I'$ (one corner to the left of $R$, one
corner to the right of $R$), each receiving 1/2 unit of charge.
If the rectangle $R$ is {\em red}, we do not attempt to charge it off,
as our charging scheme is based on charging to corners of rectangles
that lie to the left/right of $R$ (within the vertical extent of $R$),
and a red rectangle is, by definition, nesting on its left or on its
right (or both), implying that we will not have available the corners
we need to be able to charge. Instead, our accounting scheme below
will take advantage of the fact that the number of non-red rectangles
is at least~$k/2$.

Consider a non-red rectangle $R\in I'$ within $Q$ that is crossed by a
vertical cut segment $\sigma$.  By property (v) of
Lemma~\ref{lem:key}, there are no other
vertical cut segments of $\chi$.
We charge the crossing of $R$ to 2 corners, each receiving 1/2 unit of charge to ``pay'' for the crossing of $R$:
we charge 1/2 to a corner of a rectangle of $I'$
that lies to the right of $R$ (within the vertical extent of $R$),
and 1/2 to a corner of a rectangle of $I'$ that lies to the left of $R$ (within the vertical extent of $R$).

We describe and illustrate the assignment of charge to a corner that lies to the right of $R$;
the assignment to a corner that lies to the left of $R$ is done symmetrically.
Specifically, let $c$ be the northeast corner of $R$, and let $c'$ be the
corresponding shifted corner (slightly to the southwest of $c$). By
Claim~\ref{fact1}, $c$ is not exposed to the right within $Q$; thus,
the rightwards horizontal ray from $c'$ must penetrate at least one
rectangle of $I'$ that is fully within $Q$ before reaching the right boundary of
$Q$. Let $R_r$ be the first (leftmost) rectangle that is penetrated ($R_r$ is the top-right-neighbor of $R$).
Let $y^+$ and $y^-$ (resp., $y^+_r$ and $y^-_r$) be the top and bottom
$y$-coordinates of rectangle $R$ (resp., $R_r$). Similarly, let $x^+$,
$x^-$, $x^+_r$, and $x^-_r$ denote the $x$-coordinates of the
left/right sides of $R$ and $R_r$, so that
$R=[x^-,x^+]\times[y^-,y^+]$ and
$R_r=[x_r^-,x_r^+]\times[y_r^-,y_r^+]$.  By definition of $R_r$, we
know that $y^+_r \geq y^+$ and that $x^+\leq x^-_r$.
Refer to Figure~\ref{fig:charging-case-analysis}, where we illustrate the
charging scheme (for charging to the right of $R$), enumerating the 8 potential cases, depending if (i)
$y^+=y^+_r$ or $y^+<y^+_r$, (ii) $y^->y^-_r$ or $y^-\leq y^-_r$, (iii)
$x^+=x^-_r$ or $x^+<x^-_r$.
We itemize the 8 cases below:

\begin{figure*}[!ht]
	\centering
	\includegraphics[width=\textwidth]{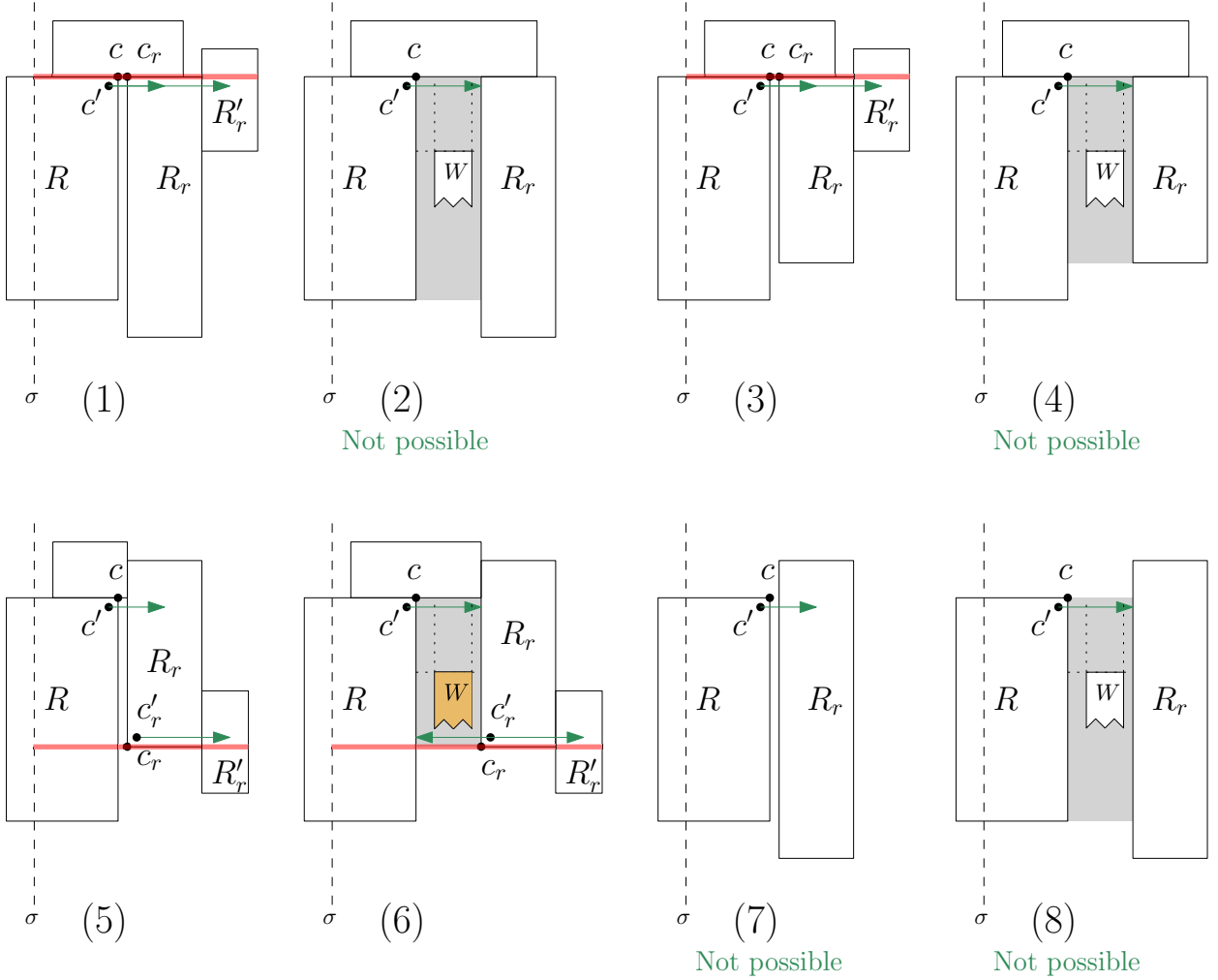}
	\caption{Case analysis for the charging scheme when a non-red
          rectangle $R$ is crossed by a vertical portion (shown
          dashed), $\sigma$, of a cut $\chi$.  The charged corner is
          labelled $c_r$, for the possible cases (1), (3), (5),
          (6). The fences established after the cut are shown in
          red. Here, we only show the case of charging to the right of
          the crossed rectangle~$R$, assigning a charge of 1/2 to an appropriately selected corner; symmetrically, we also charge 1/2 to a corner that is to the left of~$R$. 
          .}
	\label{fig:charging-case-analysis}
\end{figure*}

\begin{description}
\item[(1)] [$y^+=y^+_r$, $y^-> y^-_r$, and $x^+=x^-_r$.] We charge the
  northwest corner, $c_r$, of rectangle $R_r$.

  After the cut along $\sigma$, rectangle $R_r$ will have its top side
  exposed to the left, so, to maintain the Fence Invariant, we
  establish a horizontal fence segment (shown in red), anchored on
  $\sigma$, passing through the top side of $R_r$, and extending
  rightwards to the right side of the top-right-neighbor, $R'_r$, of
  $R_r$.  (We know that the top side of $R_r$ is not exposed to the
  right side of $Q$, since, if it were exposed, then, by the Fence
  Invariant, at the moment of first exposure there would be a fence
  extending along the top side of $R_r$, and leftwards through $R$
  (the top-left-neighbor of $R_r$), to the left side of $R$, implying
  that this fence segment is crossed by $\sigma$, a contradiction.)
  This fence will become exposed to the left (at a point along
  $\sigma$) after the cut along $\sigma$.  In the figure, we show the
  case in which $R'_r$ is abutting $R_r$; the case in which there is a
  gap between $R_r$ and its top-right-neighbor $R'_r$ is similar.

\item[(2)] [$y^+=y^+_r$, $y^-> y^-_r$, and $x^+<x^-_r$.]  In this
  case, there is a rectangular region (shown in gray in the figure)
  separating the right side of $R$ from the left side of $R_r$. This
  gray rectangle must intersect the interior of at least one rectangle
  of $I'$; otherwise, $R$ is not maximal, as it can be extended to the
  right, until it abuts $R_r$. Thus, there must be some rectangle in
  $I'$ whose interior overlaps with the interior of the gray
  rectangle; let $W\in I'$ be such a rectangle whose top side has the
  maximum $y$-coordinate among all such rectangles. Then, we get a
  contradiction as follows: If the top side of $W$ lies at or above
  the top sides of $R$ and $R_r$, then the rightwards ray from $c'$
  penetrates $W$ before penetrating $R_r$, contradicting the
  definition of $R_r$; on the other hand, if the top side of $W$ lies
  below the rightwards ray from $c'$, then we get a contradiction to
  the maximality of $W$, since it can be extended upwards, to the top
  of the gray rectangle.  Thus, this case cannot happen.
 
\item[(3)] [$y^+=y^+_r$, $y^-\leq y^-_r$, and $x^+=x^-_r$.] This is
  handled exactly as in case (1): we charge the northwest corner $c_r$
  and establish the fence segment (shown in red), through the top side
  of $R_r$, extending to the right side of the top-right-neighbor, $R'_r$.

\item[(4)] [$y^+=y^+_r$, $y^-\leq y^-_r$, and $x^+<x^-_r$.] Exactly as
  in case (2), this case cannot happen.

\item[(5)] [$y^+<y^+_r$, $y^-\leq y^-_r$, and $x^+=x^-_r$.]  We charge
  the southwest corner, $c_r$, of rectangle $R_r$. After the cut along
  $\sigma$, rectangle $R_r$ will have its bottom side exposed to the
  left, so, to maintain the Fence Invariant, we establish a horizontal
 fence segment (shown in red), anchored on
  $\sigma$, passing through the bottom side of $R_r$, and extending
  rightwards to the right side of the bottom-right-neighbor, $R'_r$, of
  $R_r$.  (We know that the bottom side of $R_r$ is not exposed to the
  right side of $Q$, since, if it were exposed, then, by the Fence
  Invariant, at the moment of first exposure there would be a fence
  extending along the bottom side of $R_r$, and leftwards through $R$
  (the bottom-left-neighbor of $R_r$), to the left side of $R$, implying
  that this fence segment is crossed by $\sigma$, a contradiction.)

\item[(6)] [$y^+<y^+_r$, $y^-\leq y^-_r$, and $x^+<x^-_r$.] We charge
  the southwest corner, $c_r$, of rectangle $R_r$. Consider the gray
  rectangle, $[x^+,x^-_r]\times[y^-_r,y^+]$; we claim that there can
  be no rectangle $W\in I'$ whose interior overlaps with this gray
  rectangle, for, if such a rectangle existed, then one such
  rectangle, $W$, that maximizes the $y$-coordinate of the top side
  fails to be maximal. Thus, no rectangle fully within the subface to the right of $\sigma$ is penetrated by the
  leftwards ray from $c_r'$ before it reaches $\sigma$, on the left side of the subface. Thus, after
  the cut along $\sigma$, rectangle $R_r$ will have its bottom side
  exposed to the left. In order to maintain the Fence Invariant, we
  establish a horizontal 
 fence segment (shown in red), anchored on
  $\sigma$, passing through the bottom side of $R_r$, and extending
  rightwards to the right side of the bottom-right-neighbor, $R'_r$, of
  $R_r$.  (As in case (5), we know that there must be a bottom-right-neighbor,
  since the bottom side of $R_r$ cannot have been exposed to the right, as this would imply
  the existence of a fence across $R$, contradicting the fact that the cut segment $\sigma$ crosses no fence segment.)
  led to a fence that

\item[(7)] [$y^+<y^+_r$, $y^-> y^-_r$, and $x^+=x^-_r$.] We get a
  contradiction to the fact that $R$ is non-red, since the
  inequalities $y^+<y^+_r$ and $y^-> y^-_r$ say that $R$ is nesting on
  its right. Thus, this case cannot happen.

\item[(8)] [$y^+<y^+_r$, $y^-> y^-_r$, and $x^+<x^-_r$.] By reasoning
  exactly as in case (2), this case cannot happen.
\end{description}

Summarizing the above case enumeration of the charging scheme, we
state the following facts:

\begin{claim}\label{fact2}
A corner of a rectangle is charged at most once (with a charge of 1/2).
\end{claim}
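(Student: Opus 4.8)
The plan is to prove Claim~\ref{fact2} in two stages: first, that the two corners charged by a single crossing event are distinct; second, that two distinct crossing events never charge the same corner.

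For the first stage: when a non-red rectangle $R$ is crossed by a vertical cut segment $\sigma$, one unit of $1/2$ is charged to a corner of a rectangle of $I'$ that lies strictly to the left of $R$ within the vertical extent of $R$, and the other $1/2$ to a corner of a rectangle of $I'$ that lies strictly to the right of $R$ within the vertical extent of $R$. Since a rectangle of $I'$ that is disjoint from $R$ cannot lie both strictly to the left and strictly to the right of $R$ within $R$'s vertical extent, these two corners belong to distinct rectangles and hence are distinct points.

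For the second stage the key tool is the fence installed at the moment of charging. Recall from the case enumeration that whenever a corner $c$ of a rectangle $S\in I'$ is charged, $c$ is the northwest or southwest corner of $S$ when $S$ is the (top- or bottom-) right-neighbor of the crossed rectangle, and symmetrically the northeast or southeast corner of $S$ when $S$ is a left-neighbor; and in every such case we immediately install a fence segment $\phi$ containing the side of $S$ incident to $c$ (a top or bottom side), running from the cut $\sigma$ on one side, through $S$, and on through the corresponding neighbor of $S$ on the other side. Two consequences follow: (i) $S$ becomes grounded on that side, so by the Fence Invariant together with Lemma~\ref{lem:key}(iv) no later vertical cut segment ever crosses $\phi$ or $S$; (ii) the crossed rectangle $R$ responsible for the charge is discarded and hence is never crossed again. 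I would then argue that charging $c$ a second time is incompatible with (i)--(ii): by the case analysis the second charge would again make $c$ a northwest/southwest corner of a right-neighbor, or a northeast/southeast corner of a left-neighbor, of some later crossed rectangle $R'$; matching the coordinates of $c$ against the geometry of that configuration forces either $R'$ to share a corner with $R$ (hence $R'=R$, impossible since $R$ was discarded) or forces the later vertical cut segment, or a rectangle of $I'$, to cross $\phi$, contradicting Lemma~\ref{lem:key}(iv) or the property (noted just after the definition of fences) that a fence never crosses a rectangle of $I'$. Claim~\ref{fact1} is used here to ensure the corner involved in the second charge is still ``reachable'' and not already cut off by a fence.

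The step I expect to be the main obstacle is ruling out the ``stacked'' scenario: a priori the point $c$ could be charged once as the northwest corner of a rectangle $S$ and once as the southwest corner of a different rectangle $S'$ lying directly above $S$ with a coincident corner (and the analogous northeast/southeast pairing). Here I would use the fact that the fence installed when charging the northwest corner of $S$ and the fence installed when charging the southwest corner of $S'$ both lie on the same horizontal line, namely the line through $c$ parallel to the $x$-axis, together with the fact that these fences each extend through the relevant neighbor of $S$ (respectively $S'$) --- precisely the strengthening over the earlier version of the paper that powers Claim~\ref{fact5} --- to conclude that the vertical cut segment producing the second charge would have to cross the first fence, which Lemma~\ref{lem:key}(iv) forbids. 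Once this geometric case is dispatched, the remaining cases reduce to the uniqueness of the corner-to-rectangle assignment and the fact that a discarded rectangle is never crossed again.
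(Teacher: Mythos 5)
Your proposal takes a genuinely different route from the paper, and I think it has a real gap. The paper's proof of Claim~\ref{fact2} is a one-idea argument: whenever a corner $c_r$ is charged, this is precisely because the cut $\sigma$ removes the blocking rectangle $R$ and makes $c_r$ newly exposed to the left (in cases (1), (3), (5), (6)). Exposure is monotone: once a corner is exposed to the left within its CCR face, it stays exposed for the rest of the recursion. And a corner can only be charged when it was not previously exposed --- the blocking rectangle $R$ (the top/bottom-left-neighbor of $R_r$) lies on the leftward ray from the shifted corner, and if $c_r$ were already exposed there would already be a fence through $R$ that $\sigma$ would cross, contradicting Lemma~\ref{lem:key}(iv). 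So ``charged $\Rightarrow$ becomes newly exposed'' plus ``exposure is irreversible'' immediately gives ``charged at most once.'' Your proposal never invokes this monotonicity at all.

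Instead, you try to argue via the fence machinery: a charge installs a fence, the fence protects $S$ and its neighbor from being crossed later, and then you want to derive a contradiction from a hypothetical second charge by ``matching the coordinates of $c$ against the geometry of that configuration.'' That last step is not a proof --- it is exactly where you would need to produce the argument, and you leave it as ``forces either $R'$ to share a corner with $R$ \ldots{} or forces the later vertical cut segment to cross $\phi$,'' without establishing either disjunct. The fence-crossing argument you sketch is the right tool for Claims~\ref{fact3} and~\ref{fact5} (a charged rectangle is never subsequently crossed; at most two corners of a rectangle are charged), but it is more machinery than Claim~\ref{fact2} needs and you have not actually closed it.

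Two smaller points. First, your ``stacked scenario'' worry is a red herring for this claim: a northwest corner of $S$ and a southwest corner of a different rectangle $S'$ coinciding as geometric points are two different \emph{corners of rectangles}, and Claim~\ref{fact2} is a statement about a fixed (rectangle, corner) pair, so that configuration would not violate it even if it occurred. Second, your ``stage 1'' (the two corners charged by one crossing are distinct) is correct but does not need to be argued separately once you have the monotonicity idea, since the left-charged corner and right-charged corner become exposed in opposite directions. The fix for your proposal is to replace the coordinate-matching step by the observation that a corner is charged only at the moment it first becomes exposed to the left (or, symmetrically, to the right), and exposure never un-happens.
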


\begin{proof}
  When a corner $c_r$ is charged, as illustrated in the case
  enumeration of Figure~\ref{fig:charging-case-analysis} (cases (1),
  (3), (5), (6)), the corner $c_r$ becomes exposed to the left and a
  horizontal fence segment is established (according to the Fence
  Invariant), extending from the segment $\sigma$, along the top/bottom edge
  of the rectangle whose corner $c_r$ is being charged, and rightwards
  to the right side of the top/bottom-right-neighbor of that rectangle.  Once
  exposed to the left, a corner remains exposed to the left in the CCR
  face containing it, for the remainder of the process of recursive
  partitioning.
\end{proof}

\begin{claim}
  \label{fact3}
  If a corner, $c_r$, of rectangle
  $R_r\in I'$ is charged, then $R_r$ has not previously been crossed
  by a vertical cut segment, is not currently (as part of $\chi$)
  crossed by a vertical cut segment, and will not be crossed by
  vertical cut segments later in the recursive CCR-partition.
\end{claim}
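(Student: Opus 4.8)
The plan is to show that once a corner $c_r$ of a rectangle $R_r\in I'$ is charged, the rectangle $R_r$ is ``grounded'' (in the sense defined before Lemma~\ref{lem:key}) and stays grounded for the remainder of the recursive partitioning, so that by Lemma~\ref{lem:key}(iv) — vertical cut segments never cross fences — it can never be crossed; and then to rule out the possibility that $R_r$ had already been crossed \emph{before} $c_r$ was charged, by appealing to Claim~\ref{fact1}. First I would recall the mechanics of the charging scheme: in each of the four cases (1),(3),(5),(6) where a corner $c_r$ of $R_r$ is charged, the cut along $\sigma$ makes the top (in cases (1),(3)) or bottom (in cases (5),(6)) side of $R_r$ exposed to the left within the new CCR subface, and — to restore the Fence Invariant — a fence segment is installed that contains that entire top/bottom side of $R_r$ and extends from $\sigma$ on the left all the way to the right side of the corresponding top/bottom-right-neighbor of $R_r$. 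In particular this fence segment contains a top or bottom edge of $R_r$, so $R_r$ becomes grounded.

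Next I would argue persistence. Once a corner $c_r$ becomes exposed to the left in a CCR face, it remains exposed to the left in whatever CCR face contains it at every later stage — this is exactly the monotonicity already invoked in the proof of Claim~\ref{fact2} (later vertical cuts can only shorten, never undo, the leftward visibility to a cut boundary; horizontal cuts lie along fences and do not affect it). By the Fence Invariant, as long as that side of $R_r$ is exposed to the left, there is a fence segment containing it. Hence $R_r$ remains grounded throughout the rest of the partition. Since a vertical cut segment that \emph{crosses} a rectangle must cross both its horizontal sides and any horizontal segment that spans it from one vertical side to the other, and since by Lemma~\ref{lem:key}(iv) no vertical cut segment crosses a fence, no vertical cut segment can cross $R_r$ at any later stage. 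This handles ``is not currently crossed by $\chi$'' as well, since the vertical cut segment of $\chi$ is $\sigma$ itself, and $\sigma$ does not cross $R_r$: in cases (1),(3),(5),(6) the charged corner $c_r$ and the entire relevant side of $R_r$ lie strictly to the right of $\sigma$ (or, when $x^+ = x^-_r$, on the boundary between the two subfaces, with $R_r$ lying in the right subface and $\sigma$ only possibly penetrating, not crossing, $R_r$).

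Finally I would rule out a prior crossing. Suppose, for contradiction, that $R_r$ had been crossed by some vertical cut segment at an earlier stage. By Claim~\ref{fact1}, at the moment of that crossing neither the top nor the bottom side of $R_r$ was exposed to the left or right within the then-current CCR face. But the charging of $c_r$ at the current stage requires (via the case analysis) that $R_r$ has a top-right-neighbor or bottom-right-neighbor within the current face $Q$ and lies fully within $Q$, in particular $R_r$ is intact; and more to the point, a rectangle once \emph{crossed} is discarded — it is removed from active status and from $I'$ for the purpose of the remaining partition — so it could not reappear as the neighbor $R_r$ that gets charged. The main obstacle, and the step requiring the most care, is the persistence argument: one must verify that the leftward-exposure of the charged side of $R_r$ is genuinely monotone under all three operations the recursion performs (vertical cuts, horizontal cuts along fences, and the restriction to subfaces), so that the Fence Invariant continues to supply a fence through $R_r$; this is where the extension of fences through the right-neighbor (rather than stopping at $R_r$'s own right side) is implicitly used, as it guarantees the fence genuinely spans $R_r$ and is not merely anchored at its boundary.
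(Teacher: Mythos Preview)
Your argument is correct and, for the ``not currently'' and ``not later'' parts, essentially matches the paper: the fence established along the top/bottom side of $R_r$ (by the Fence Invariant) grounds $R_r$, and Lemma~\ref{lem:key}(iv) then blocks any future vertical cut from crossing it; for ``not currently'' both you and the paper use that $\sigma$ is the \emph{only} vertical segment of $\chi$ (property~(v)) and that $R_r$ sits strictly to the right of $\sigma$.

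Where you diverge is in ruling out a \emph{prior} crossing of $R_r$. The paper argues indirectly through $R$: had $R_r$ been crossed earlier by some $\sigma'$, then at that moment the northeast corner of $R$ (whose top-right-neighbor is $R_r$) would have become exposed to the right, and the Fence Invariant would have placed a fence along the top of $R$ extending through its top-left-neighbor; that fence would then forbid $\sigma$ from crossing $R$, contradicting the premise. Your route is more direct: you observe that the charging scheme only ever selects as $R_r$ a rectangle that lies \emph{fully within} the current face $Q$ (this is built into the definition of right-neighbor), and a rectangle already crossed by an earlier vertical cut cannot lie fully within any descendant face. This is a valid and somewhat cleaner shortcut; note, though, that the operative fact is ``fully within $Q$'' rather than ``discarded from $I'$'' --- the paper does not remove crossed rectangles from $I'$, only from active status, so your second phrasing is slightly off while your first is exactly right. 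Your invocation of Claim~\ref{fact1} in that paragraph is unnecessary and can be dropped.
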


\begin{proof}
  In all cases (1), (3), (5) and (6), if $R_r$ had previously been crossed
  by a vertical cut segment, $\sigma'$, then the northeast corner $c$
  of $R$ would have become exposed to the right (since $R_r$ is the
  top-right-neighbor of $R$), and a fence segment would have been
  established through $c$, along the top side of $R$ (extending
  leftwards to the left side of the top-left-neighbor of $R$); this
  fence prevents $R$ from being crossed, since a vertical cut segment
  never crosses a fence segment.

  There is no other vertical cut segment of $\chi$ (recall property (v) of Lemma~\ref{lem:key}), besides $\sigma$; thus, $R_r$ is
  not crossed by a vertical cut segment of $\chi$.

  Later in the recursive CCR-partition, the rectangle $R_r$ cannot be
  crossed by a vertical cut segment, since such a crossing would imply
  a crossing of the fence we established through the top or bottom 
  side of $R_r$ (and this is ruled out by property (iv) of
  Lemma~\ref{lem:key}). Indeed, we establish fences in order to enforce
  that once a rectangle has a corner charged, it cannot be crossed by
  later cuts in the recursive partitioning.
\end{proof}

\begin{claim}
  \label{fact5}
  At most two corners of a rectangle are charged: it is never the case
  that both a northwest and a northeast corner of the same rectangle
  is charged (and a similar statement holds for southwest and southeast
  corners).
\end{claim}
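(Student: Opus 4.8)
The plan is to prove the two asserted impossibilities separately: a rectangle $S\in I'$ cannot have both its northwest and its northeast corner charged, and (symmetrically, with top and bottom interchanged) cannot have both its southwest and its southeast corner charged. Granting these, the ``at most two corners'' conclusion follows, since the only way to charge three or four corners of a rectangle is to charge either both of its top corners or both of its bottom corners.

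First I would read off from the case enumeration precisely when a given corner is charged. When we charge to the right of a crossed non-red rectangle $R$, we use the northeast corner of $R$ to locate its top-right-neighbor $R_r$ and then charge a corner on the \emph{left} side of $R_r$: its northwest corner in cases (1) and (3), its southwest corner in cases (5) and (6); charging to the left is symmetric, charging a corner on the \emph{right} side of a top-left-neighbor. The key point is that cases (1) and (3) both require $x^+=x^-_r$ and $y^+=y^+_r$, i.e., $R$ abuts the left side of $R_r$ and their top sides are at the same height. Hence the northwest corner of $S$ is charged only if there is a crossed non-red rectangle $R$ abutting the left side of $S$ with $y^+_R=y^+_S$ and having $S$ as its top-right-neighbor; and, by the mirror statement, the northeast corner of $S$ is charged only if there is a crossed non-red rectangle $\tilde R$ abutting the right side of $S$ with $y^+_{\tilde R}=y^+_S$ and having $S$ as its top-left-neighbor.

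Now suppose, for contradiction, that both the northwest and the northeast corner of $S$ are charged, witnessed by $R$ on the left and $\tilde R$ on the right as above. Since $R$ and $\tilde R$ lie on opposite sides of $S$, no vertical line meets both of their interiors, so the crossing of $R$ and the crossing of $\tilde R$ are performed by different vertical cut segments; as each CCR-cut contains at most one vertical cut segment (Lemma~\ref{lem:key}(v)), these crossings occur in different CCR-cuts, at distinct moments of the recursion. Say $R$ is crossed first, at time $t_1$, and $\tilde R$ at time $t_2>t_1$. At time $t_1$, charging the northwest corner of $S$ establishes, to restore the Fence Invariant, a horizontal fence running along the top side of $S$, anchored on the cut segment crossing $R$ and extending rightwards to the right side of the top-right-neighbor of $S$. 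But $\tilde R$ abuts the right side of $S$ with $y^+_{\tilde R}=y^+_S$, so the rightward ray from the shifted northeast corner of $S$ enters $\tilde R$ immediately; thus $\tilde R$ \emph{is} the top-right-neighbor of $S$, and this fence runs along the entire top side of $\tilde R$, making $\tilde R$ grounded from time $t_1$ onward. By Lemma~\ref{lem:key}(iv) no vertical cut segment ever crosses a fence, hence none ever crosses a grounded rectangle; so $\tilde R$ cannot be crossed at time $t_2>t_1$, a contradiction. (The situation is symmetric in $t_1$ and $t_2$: the fence established when $S$'s northeast corner is charged grounds $R$, so whichever of $R$ and $\tilde R$ is crossed second is already grounded by then.) The southwest/southeast case is identical with ``top'' replaced by ``bottom'' throughout; the only extra wrinkle is that in cases (5)/(6) the crossed rectangle need not abut $S$, since case (6) allows a gap, but the ``empty gray region'' argument used there shows it is still the bottom-left- or bottom-right-neighbor of $S$, so the extended fence still grounds it.

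The part needing care is the bookkeeping about which CCR face a rectangle occupies at each stage of the recursion: when $S$'s northwest corner is charged I must know that the top-right-neighbor $\tilde R$ is itself fully contained in the CCR face then being cut (so that the fence, which is built inside that face, really does span $\tilde R$), and that once a rectangle is grounded it stays grounded, its top (or bottom) side remaining inside a fence segment in whatever descendant face contains it, because no admissible cut may cross that fence and a cut crossing $\tilde R$ would cross it. Both facts come from the definitions --- the top-right-neighbor is by definition a rectangle lying fully within the current face, and fence segments persist under subdivision --- but they are where the argument must be written out carefully, together with the routine check that every corner-charging case does conform to the pattern claimed in the second paragraph.
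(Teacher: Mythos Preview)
Your proposal is correct and follows essentially the same line as the paper's own proof: the northwest-corner charge on $S$ triggers a fence along the top of $S$ extending through its top-right-neighbor, and since the rectangle $\tilde R$ responsible for a northeast-corner charge must itself be that top-right-neighbor, it is grounded and cannot be crossed later. You fill in several details the paper leaves implicit---the temporal ordering of the two crossings via Lemma~\ref{lem:key}(v), the identification of $\tilde R$ with the top-right-neighbor at the earlier time (using that $\tilde R$ abuts $S$ with tops aligned and lies in the descendant face $Q_2\subseteq Q_1$), and the extra ``empty gray region'' check needed for the southwest/southeast case when case~(6) applies---but the core mechanism is identical.
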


\begin{proof}
  If, as in cases (1) and (3), the northwest corner, $c_r$, of a rectangle
  $R_r\in I'$ is charged for the crossing of a rectangle $R$, then a fence is established along the top side of $R_r$,
  extending rightwards to the right side of the top-right-neighbor, $R'_r$, of
  $R_r$. This fence prevents $R'_r$ from being
  crossed later by a vertical cut segment; thus, the northeast corner of
  $R_r$ cannot be charged, since, if the
  northeast corner were charged for the crossing of some rectangle
  $R'$ to its right, then $R'$ must be the top-right-neighbor of $R_r$ (just as we saw in cases (1) and (3) that when the northwest corner $c_r$ is charged for the crossing of $R$ by $\sigma$, the rectangle $R$ must be the top-left-neighbor of $R_r$), and we know
  that the top-right-neighbor of $R_r$ cannot be crossed by a vertical cut segment, since it
  is protected by the previously established fence.  A similar statement holds for cases (5) and (6), when the southwest corner of $R_r$ is charged for the crossing of $R$: it cannot be that the southeast corner is also charged, as the bottom-right-neighbor of $R_r$ is protected by the fence segment established.
\end{proof}

The set of red rectangles can be partitioned into two sets: those that
are cut (let $h_\chi$ be the number) and those that are not cut (let
$h_0$ be the number).
Similarly, non-red rectangles can be partitioned into two sets: those that are
cut (let $m_\chi$ be the number) and those that are not cut (let $m_0$
be the number).
By Claim~\ref{fact3}, if any rectangle is charged,
then it is never crossed by a cut; thus, there are at most $h_0+m_0$
rectangles that are charged, and we know from
Claim~\ref{fact2} that the amount of charge received by any of these
rectangles is at most 1 (the crossing of a rectangle $R$ by $\sigma$ resulted in a charge of 1/2 to a northwest corner and 1/2 to a northeast corner).
We conclude that

\begin{claim}
  \label{fact6}
  The total sum of all charges is at most $h_0+m_0$, and thus $m_\chi\leq h_0+m_0$. 
\end{claim}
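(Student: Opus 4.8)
The plan is to prove Claim~\ref{fact6} by a conservation-of-charge (double counting) argument: I would compute the total amount of charge \emph{generated} and express it in terms of $m_\chi$, compute an upper bound on the total amount of charge that can be \emph{received} in terms of $h_0+m_0$, and then observe that these two totals must coincide. The bulk of the work has already been done in Claims~\ref{fact2}, \ref{fact3}, and \ref{fact5}; the remaining step is to assemble them.

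First I would account for the charge generated. By construction, every non-red rectangle $R$ that is crossed by a vertical cut segment is charged off by depositing exactly two increments of $1/2$ charge --- one onto a corner of a rectangle of $I'$ lying to the right of $R$ (within its vertical extent), and one onto a corner of a rectangle lying to its left. The case analysis summarized in Figure~\ref{fig:charging-case-analysis} shows that whenever such a crossing occurs, a legitimate corner to the right exists (cases (1), (3), (5), (6) being the only feasible configurations, each of which exhibits such a corner), and by the symmetric argument a legitimate corner to the left exists as well; hence each crossing deposits exactly one unit of charge. Moreover, once a rectangle $R\in I'$ is crossed by a vertical cut segment it is discarded and the CCR $Q$ is split along that segment, so no subface of $Q$ thereafter contains $R$ in its entirety, and $R$ is never crossed again. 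Consequently, summed over the entire recursive CCR-partition, the total charge generated is exactly $m_\chi$.

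Next I would bound the charge received. Charge is only ever deposited onto corners of rectangles of $I'$. By Claim~\ref{fact3}, any rectangle $R_r\in I'$ one of whose corners gets charged is never crossed by a vertical cut segment (not before, not during, and not after the charging cut), so $R_r$ is one of the $h_0+m_0$ rectangles that are never cut. By Claim~\ref{fact2} each corner is charged at most once, i.e.\ receives at most $1/2$ unit, and by Claim~\ref{fact5} at most two corners of any single rectangle are ever charged (never both the top corners, and never both the bottom corners). Hence every rectangle receives at most $2\cdot(1/2)=1$ unit of charge, and therefore the total charge received is at most $h_0+m_0$. Since each deposited $1/2$ unit lands on exactly one corner, the charge received equals the charge generated, so the total sum of all charges is at most $h_0+m_0$, and combining with the previous paragraph gives $m_\chi\le h_0+m_0$.

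I do not expect a serious obstacle here: the argument is essentially bookkeeping. The one point that requires care --- and the place where the earlier machinery is doing the real work --- is that the per-rectangle reception bound of at most one unit must hold \emph{globally} over the whole recursion, not merely within a single cut; this is exactly what the fences guarantee, via Claims~\ref{fact2}, \ref{fact3}, and~\ref{fact5}, which ensure that the fence established at the moment of each charge protects the charged rectangle from all future crossings and prevents any corner, or any ``same-edge'' pair of corners, from being overcharged. So the main thing to get right in writing up the proof is simply to cite those three claims at the correct points and to note that a crossed rectangle contributes its charge exactly once.
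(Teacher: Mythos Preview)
Your proposal is correct and follows essentially the same approach as the paper: a double-counting argument where the total charge generated equals $m_\chi$ (one unit per crossed non-red rectangle) and the total charge received is at most $h_0+m_0$ (via Claims~\ref{fact2}, \ref{fact3}, and~\ref{fact5}, which bound the per-rectangle reception by~1 and restrict reception to uncrossed rectangles). The paper's own proof is terser but invokes the same three claims in the same roles.
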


\begin{proof}
  By Claim~\ref{fact5}, at most two corners of a rectangle is charged (each with a charge of 1/2), and by Claim~\ref{fact2}, each is charged at most once.
Since, by specification of the charging scheme, there is exactly one unit of charge per crossed non-red rectangle, we get that $m_\chi\leq h_0+m_0$.
\end{proof}

In the recursive partitioning, we discard every one of the $h_\chi+m_\chi$ rectangles that are crossed by vertical cut segments;
what remains in the end are the $h_0+m_0$ rectangles that were not crossed and therefore not discarded. (These remaining rectangles either appear as isolated rectangles within a leaf face of the CCR-partition, or appear as special rectangles, which are penetrated but not crossed.)
%
%
Now, Claim~\ref{fact6}, together with the fact that $m_0+m_\chi\geq
k/2$ (by assumption), imply that
$$h_0+m_0 \geq m_\chi \geq (k/2) - m_0,$$
from which we see that
$$2h_0 + 4m_0 \geq k,$$
and thus, 
$$4(h_0+m_0) \geq k.$$
This implies that
$$h_0+m_0\geq k/4,$$
which proves our main claim (with a 4-approximation),
\begin{claim}\label{fact7}
  The total number, $h_0+m_0$, of rectangles that are not crossed by
  vertical cut segments (and thus that survive) in the recursive
  partitioning is $h_0+m_0 \geq k/4$.
\end{claim}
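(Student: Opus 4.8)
The plan is to obtain Claim~\ref{fact7} as a short arithmetic consequence of the charging scheme, so that the substantive work lives entirely in the scheme's structural properties (Claims~\ref{fact1}--\ref{fact6}) and the final step is pure bookkeeping. First I would fix the partition used throughout: under the standing assumption that the red rectangles number at most $k/2$, the non-red rectangles number at least $k/2$. Running the recursive CCR-partition guaranteed by Lemma~\ref{lem:key} to completion, every rectangle of $I'$ is either crossed by some vertical cut segment at some stage (and then discarded) or never crossed (and then it survives, either isolated in a leaf face or as a penetrated ``special'' rectangle). Writing $h_\chi,h_0$ for the numbers of crossed/uncrossed red rectangles and $m_\chi,m_0$ for the crossed/uncrossed non-red rectangles, the quantity to be lower-bounded is exactly the survivor count $h_0+m_0$.

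Next I would invoke the charging scheme: each crossed non-red rectangle $R$ emits exactly one unit of charge, $1/2$ to a corner of a rectangle of $I'$ lying to its right and $1/2$ to a corner of a rectangle lying to its left, where the recipient corner is identified by the $8$-case geometric analysis (only cases (1),(3),(5),(6) arise, the others being excluded by maximality of the rectangles of $I'$ or by $R$ being non-red). Since, by Claim~\ref{fact1}, a crossed rectangle has none of its top/bottom sides exposed, the required right/left-neighbor always exists, so the charge is well defined, and the total charge emitted is $m_\chi$. I then need the three structural facts: (i) a corner that receives charge belongs to a rectangle never crossed by a vertical cut segment at any stage (Claim~\ref{fact3}), enforced because charging a corner simultaneously installs, via the Fence Invariant, a horizontal fence through the corresponding side of that rectangle, extended through its right/left-neighbor, and Lemma~\ref{lem:key}(iv) forbids any vertical cut segment from crossing a fence; (ii) each corner is charged at most once (Claim~\ref{fact2}), since the installed fence leaves it permanently exposed on one side; (iii) no rectangle has both its NW and NE corners (or both its SW and SE corners) charged (Claim~\ref{fact5}), using the \emph{extended} fences to protect the relevant neighbor from a later crossing. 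Together these give: every unit of charge lands on a surviving rectangle, and each surviving rectangle absorbs at most two half-units, hence at most one unit; therefore the total charge received is at most $h_0+m_0$, so $m_\chi \le h_0+m_0$ (Claim~\ref{fact6}).

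Finally I would close with the arithmetic. From $m_0+m_\chi \ge k/2$ we have $m_\chi \ge (k/2)-m_0$, and combined with $m_\chi \le h_0+m_0$ this gives
$$h_0+m_0 \ge m_\chi \ge (k/2)-m_0,$$
hence $h_0+2m_0 \ge k/2$, i.e.\ $2h_0+4m_0 \ge k$; since $4(h_0+m_0)=4h_0+4m_0 \ge 2h_0+4m_0 \ge k$, we conclude $h_0+m_0 \ge k/4$, which is the claim (and yields the $4$-approximation in Theorem~\ref{thm:structure}). The main obstacle is not this arithmetic but securing properties (i) and (iii): one must argue that the fences installed at each charging step really are preserved by \emph{all} subsequent CCR-cuts (that the Fence Invariant is maintained under the recursion of Lemma~\ref{lem:key}), and that extending each fence through the left/right-neighbor of the newly exposed rectangle genuinely rules out the opposite corner being charged later. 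The $8$-case enumeration itself, and the impossibility of cases (2),(4),(7),(8), is then a routine geometric check exploiting maximality of the rectangles of $I'$.
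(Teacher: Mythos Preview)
Your proposal is correct and follows essentially the same approach as the paper: Claim~\ref{fact7} is obtained as a short arithmetic consequence of Claim~\ref{fact6} ($m_\chi \le h_0+m_0$) together with the assumption $m_0+m_\chi \ge k/2$, via exactly the chain $h_0+m_0 \ge m_\chi \ge (k/2)-m_0 \Rightarrow 2h_0+4m_0 \ge k \Rightarrow 4(h_0+m_0)\ge k$. Your recap of the charging scheme and of Claims~\ref{fact2}--\ref{fact5} underpinning Claim~\ref{fact6} is accurate and matches the paper's argument.
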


\subsection*{Improving the Approximation Factor}

While our main goal in this paper is simply to get an
$O(1)$-approximation algorithm that runs in polynomial time, we
briefly discuss how our methods yield an improvement of the
approximation factor 4 described above, if some additional care is
taken in the charging scheme.

Consider the cases ((1), (3), (5) and (6)), where we described the
charging of 1/2 to a corner $c_r$ (northwest or southwest of $R_r$,
the top-right-neighbor of $R$), to account for half of the cost of
$\sigma$ crossing $R$ (the other half being charged, by a symmetric
argument, to a corner to the left of $R$).
We now impose a preference to charge a northwest corner, if possible,
of a rectangle for which $R$ is its top-left-neighbor.  In cases (1)
and (3), $c_r$ is already a northwest corner. In cases (5) and (6),
provided that the bottom of $R_r$ is not aligned with the bottom of
$R$ (i.e., provided that $y_r^- > y^-$), there must be (by maximality)
at least one rectangle abutting $R_r$ from below, so we let $\bar R_r$
be that rectangle abutting $R_r$ with leftmost left side (so that $R$ is the top-left-neighbor of $\bar R_r$); instead of
the southwest corner of $R_r$, we can charge the northwest corner of
$\bar R_r$, and the top of $\bar R_r$ will become exposed after the
cut along $\sigma$, so we establish a fence, anchored on $\sigma$,
containing the top side of $\bar R_r$, and extending rightwards to the
right side of the top-right-neighbor of $\bar R_r$.  In cases (5) and
(6), if $y_r^- = y^-$, then we go ahead and charge the southwest
corner of $R_r$, as we do not have the option to charge the northwest
corner of $\bar R_r$, which abuts $R_r$ from below, since the interior
of $\bar R_r$ lies strictly below the bottom side of $R$ (implying
that $R$ is not the top-left-neighbor of $\bar R_r$).

Now, consider a rectangle $R_r$ for which both of its left corners
(northwest and southwest) is charged (an amount 1/2 each).  As the
above argument shows, this happened because the northwest corner of
$R_r$ was charged (in case (1) or (3)), since the top side of $R_r$ is
aligned with the top of a crossed (by $\sigma$) rectangle $R_1$, and
its bottom side was also aligned with the bottom side of a crossed
rectangle $R_2$.  (Further, from maximality, we see that the right
sides of $R_1$ and $R_2$ are abutting the left side of $R_r$.)

There are two cases: (A) $R_r$ is red (in which case it must be nesting on its right, since it is not nesting on its left); and
(B) $R_r$ is non-red.
In case (A), we keep the full charge of 1/2 assigned to each of the two left corners of $R_r$;
this is accounted for in the term $h_0$ of the total charge, since $R_r$ is then one of the $h_0$ red rectangles that is
not crossed.
In case (B), we know (by the same cases analysis, cases (1)-(8), which showed the existence of a chargeable corner to the right of $R$) that there must be a chargeable corner, $c_{rr}$, on the
left side of a rectangle $R_{rr}$ that is the top-right-neighbor of $R_r$. We reallocate the two charges (of 1/2 each) to the left
corners of $R_r$: instead, we assign charges of 1/3 to each of the two left corners of $R_r$, and assign 1/3 to the corner $c_{rr}$.
To enable this charging scheme, we must modify slightly our specification of fences, so that a fence segment extends not just to the right side of the 
(immediate) top/bottom-right-neighbor of a rectangle whose top/bottom becomes exposed to the left, but so that the fence extension continues through the
{\em next} rectangle penetrated to the right. This extended fence prevents $R_{rr}$ from having more than two of its corners charged (each with 1/3),
as it will not be possible to charge both northwest and northeast (or to charge both southwest and southeast) corners, just as we saw in the proof of Claim~\ref{fact5}.
With this extended fence, now penetrating two rectangles (one of which may be crossed) instead of one,  our notion of ``nearly perfect'' must allow for the possibility of at most 3 (still $O(1)$) rectangles penetrated per edge of
a CCR subproblem.  This, in turn, implies an increased (polynomial) running time in the dynamic program.

This modification of the charging yields an improved approximation
factor: Since each of the $h_0$ uncrossed red rectangles has at most two corners charged, each with 1/2, and each of the $m_0$ uncrossed non-red rectangles has at most a total charge of 2/3 (arising from either a single corner charge of 1/2, or two corners charged, each with 1/3), we get that the total charge, $m_\chi$, is at most $h_0 +
(2/3)m_0$, which shows, using $m_0+m_\chi\geq k/2$,
$$h_0+(2/3)m_0 \geq m_\chi \geq (k/2) - m_0,$$
from which we get
$$2h_0 + (10/3)m_0 \geq k,$$
implying that
$$h_0+m_0\geq 3k/10,$$
yielding a 10/3-approximation.

This method of offloading charge to a next ``layer'' of neighboring
rectangles can be continued to additional layers, yielding better and
better factors, approaching a factor of 3. For example, if $R_{rr}$
has both of its left corners charged (each with an offloaded charge of
1/3), as a result of there being two rectangles ($R_{r,1}$ and
$R_{r,2}$) each having both of their left corners charged (due to a
sequence, $R_1$, $R_2$, $R_3$, $R_4$, of consecutive rectangles being
crosssed by $\sigma$), then, if $R_{rr}$ is non-red, we offload to the
corner of a right-neighbor $R_{rrr}$, while extending fences into the
next neighboring rectangle; this results in a factor 22/7 (the 4
charges of 1/2 get distributed as charges of 2/7 to each of the two
left corners of $R_{r,1}$ and $R_{r,2}$, the two left corners of
$R_{rr}$, and one left corner of $R_{rrr}$, so that each non-red
uncrossed rectangle receives charge at most 4/7). The running time
goes up, though, with each application of this method, as there become
more special rectangles needed to specify within the state of the
dynamic program, as we allow fences to penetrate more neighboring
layers. It appears that this approach to obtaining an improvement is
``stuck'' at a factor $3+\epsilon$, so that additional ideas may be
needed to improve the charging scheme further.

\section{Conclusion}

Our main result is a first polynomial-time constant-factor
approximation algorithm for maximum independent set for axis-aligned
rectangles in the plane.  Our goal here was to devise new methods that
achieve a constant factor; can the methods be developed further to
yield an improved constant approximation factor?

In the time since the original
version~\cite{DBLP:journals/corr/abs-2101-00326} appeared, where an
initial approximation factor of 10 was given, Galvez et
al~\cite{galvez20214approximation} obtained an improved factor of 6
via a novel variant of methods from
\cite{DBLP:journals/corr/abs-2101-00326} (using a fairly simple, less
technical argument), and a further improvement to factor 4 (using more
involved technical arguments). The same authors have a further
  improvement to factor 3~\cite{galvez2021-3approximation}, which
  requires considerably more technical effort and new ideas.
%
%

More ambitiously, is there a PTAS? Our approach seems to give up at
least a factor of 2 in its method of handling the nestedness issue, in
order to guarantee enough rectangles can be charged to nearby
rectangle corners.  Can the running time of a constant-factor
approximation algorithm be improved significantly? Can the Pach-Tardos
Conjecture~\ref{conj} be resolved?  (This would yield a
constant-factor approximation algorithm with an improved running
time.)

\appendix

\section*{Appendix}

\subsection*{Proof of Lemma~\ref{lem:key}}

\begin{proof}
    The proof is by a simple enumeration of cases, each of which is
    illustrated in Figure~\ref{fig:cases-new}.

In the case in which $k_\alpha=0$ or $k_\beta=0$ (there are no fence
segments, or only fence segments anchored on one side (left or right)
of $Q$), a cut consisting of a single vertical chord (within the grid
${\cal G}$) suffices to partition $Q$ into two subfaces, satisfying
the claim.  So, assume now that $k_\alpha, k_\beta\geq 1$.

In the case itemization below, we first consider the case (case (1)) in which $Q$ is a rectangle and
then consider the cases (cases (2) through (6)) in which $Q$ is an
L-shaped CCR, which, without loss of generality, has its northeast
corner clipped, as shown in the figures.
Cases (2) through (6) are distinguished based on how a particular rectangle, $C$, shown
shaded gray in Figure~\ref{fig:cases-new}, has its top and bottom sides among the fences and the top/bottom sides of $Q$.
The rectangle $C$ is one face (cell) in the {\em vertical decomposition} (also known as the
{\em trapezoidal decomposition}~\cite{BCKO})
of $Q$ with respect to the
fence segments: the vertical decomposition is a decomposition of $Q$ into rectangular faces
(cells) induced by the fence segments, together with vertical segments
through endpoints of fences (and the single reflex vertex, $v$, of $Q$) that
are maximal within $Q$ without crossing any fence segment.
Refer to Figure~\ref{fig:vert-decomp}.
Specifically, we let $C$ be the unique rectangular face containing the slightly shifted point $v'$ (shown as a small black dot in
the figures), which is just northwest of the integral reflex vertex $v=(v_x,v_y)$,
say, at $v'=(v_x-1/2,v_y+1/2)$.

\begin{figure*}[!htbp]
\centering
\includegraphics[width=0.5\textwidth]{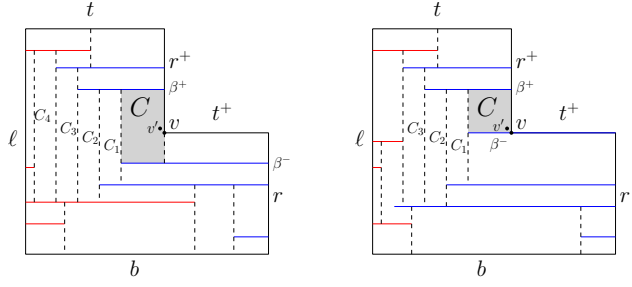}
\caption{The vertical decomposition of an L-shaped CCR $Q$ with
  respect to red and blue fence segments. The case shown is a CCR with
  the northeast corner clipped; the unique reflex vertex is $v$, and
  the slightly shifted, to the northwest, point $v'$ is shown as a
  small dot.  The cell $C$ is defined to be the unique (rectangular)
  face containing $v'$. Two examples are shown, both the the top and
  bottom sides of $C$ contained in blue fence segments $\beta^+$
  (anchored on $r^+$) and $\beta^-$ (which we consider to be anchored
  on the top endpoint of $r$, or equivalently the bottom endpoint of
  $r^+$).}
\label{fig:vert-decomp}
\end{figure*}

We partition into cases according to the top/bottom sides of the
rectangle $C$: the bottom of $C$ might lie on the bottom side, $b$, of
$Q$, or on a blue fence $\beta^-$, or on a red fence $\alpha$; the top
of $C$ might lie on the top side, $t$, of $Q$, or on a blue fence
$\beta^+$ anchored on the right side $r^+$ at a point strictly above
the reflex vertex $v$.  (The top side of $C$ cannot lie on a fence
that is not anchored on the upper right side, $r^+$, of $Q$, since $C$
is a face in the vertical decomposition that contains $v'$, and thus
abuts the right side segment $r^+$ of $Q$.)

We have the following cases:
\begin{description}
\item[(1)] [$Q$ is a rectangle] 
  Let $\beta$ be the blue fence with leftmost left endpoint.
    \begin{description}
    \item[(a)] [The rightmost red endpoint is left of the left endpoint
        of $\beta$] Then, a vertical chord through the left endpoint
      of $\beta$ does not cross any fence, so we cut along this
      vertical chord, and along $\beta$, obtaining 3 subfaces, each a
      rectangle.
    \item[(b)] [Otherwise] A vertical ray (e.g., upwards) from the
      left endpoint of $\beta$ meets a red fence, $\alpha$ before
      reaching the boundary of $Q$. We make a ``Z'' cut along $\beta$,
      a vertical segment to $\alpha$, and a portion of $\alpha$,
      obtaining two subfaces, each a CCR.
    \end{description}

    \item[(2)] [The top of $C$ lies on $t$, the bottom on $b$] We make
      a straight vertical cut along the right side of $C$,
      obtaining 2 rectangle subfaces.

    \item[(3)] [The top of $C$ lies on $t$, the bottom on blue fence
      $\beta$] Since $C$ contains $v'$, the fence $\beta$ must be
      anchored on the right side, $r$, of $Q$, at or below the top
      endpoint of $r$ (if anchored at the top endpoint of $r$, we can
      equivalently consider $\beta$ to be anchored at reflex vertex
      $v$).  We make an ``L'' cut along the left side of $C$ and a
      portion of $\beta$.  The left side of $C$ may have been
      determined by the left endpoint of $\beta$ (case (3)(a) in the
      figure), or it may have been determined by the right endpoint of
      a red fence (case (3)(b) in the figure), in which case our cut
      includes the red fence segment as well. Assuming (as shown in
      the figure) $\beta$ is anchored on a point of the right side $r$
      below vertex $v$, we obtain 2 subfaces that are L-shaped CCRs,
      and one rectangular subface (in case (3)(b)); if $\beta$ is
      anchored at reflex vertex $v$, the figures would be similar, but
      there would be 1 subface that is an L-shaped CCR, and one or two
      (in case (3)(b)) rectangular subfaces.

    \item[(4)] [The top of $C$ lies on blue fence $\beta$, the bottom
      on $b$] This case is like (3), and we obtain 2 subfaces that are
      CCRs, and possibly one rectangular subface (in case (4)(b)). (An
      alternative cut is to make a single vertical cut along the right
      side of $C$.)


      \begin{figure*}[!htbp]
\centering \includegraphics[width=\textwidth]{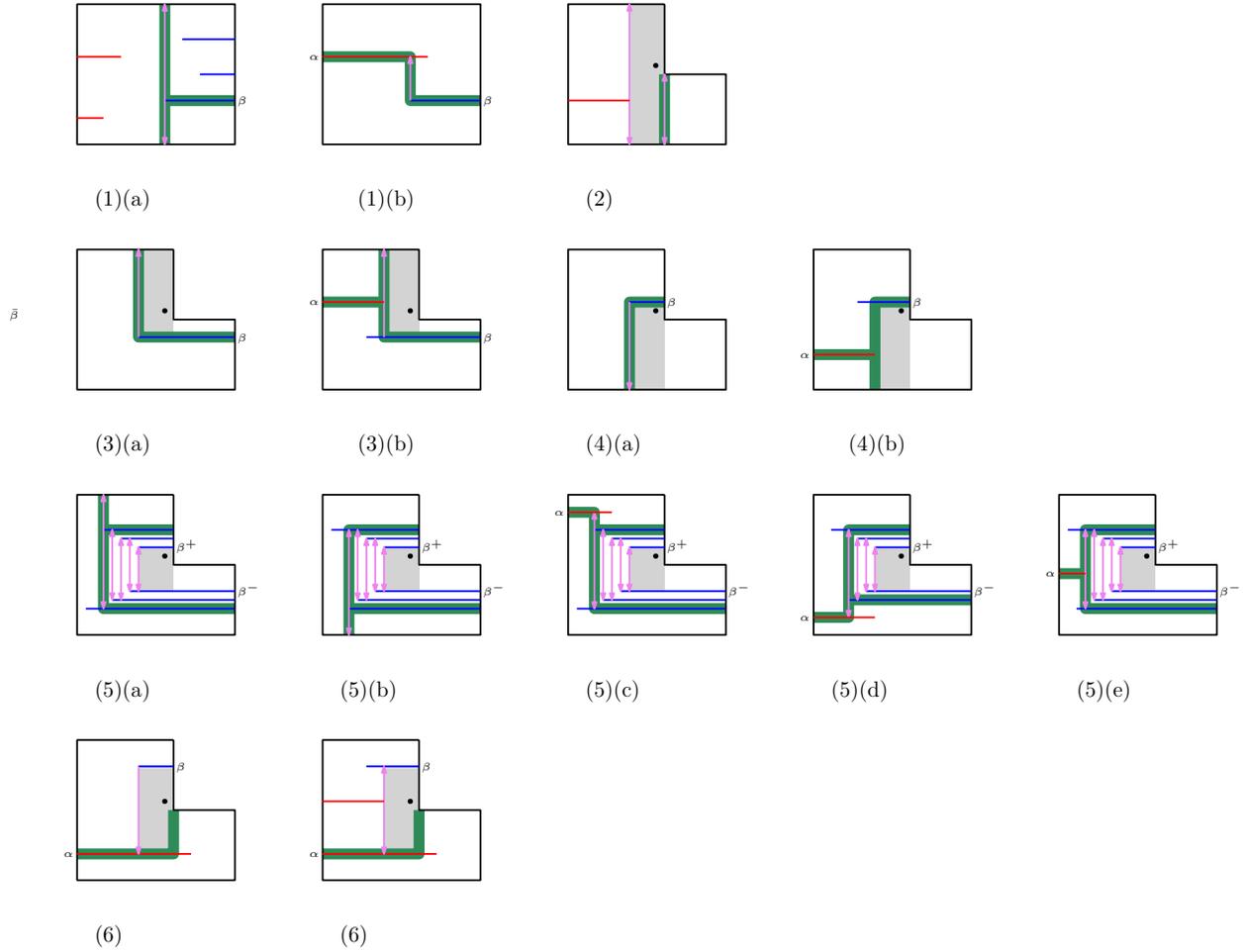}
\caption{The cases in the proof of Lemma~\ref{lem:key}.  The
  rectangular cell $C$ is shown shaded gray; it is the cell of the
  vertical decomposition that contains $v'$, the slightly shifted (to
  the northwest) reflex vertex $v$ of $Q$, shown as a small black
  dot. Cuts are shown in thick green. The fences are shown in blue if their right
    endpoint lies on the right boundary (sides $r^+$ or $r$)
    of $Q$, and are referred to using the letter beta (``$\beta$, $\beta^-$, $\beta^+$'');
    fences with left endpoint on the left boundary (side $\ell$)
    of $Q$ are shown in red and are referred to
    using a letter alpha (``$\alpha$'').  Note that in cases (3) and (5) the blue fence segment is shown anchored at a point interior to the right side, $r$; it could be anchored at the top endpoint of $r$, or equivalently at the reflex vertex $v$.
    Only those fence segments that are needed to identify the subcases
    are shown; there may be numerous other red/blue fences.
}
\label{fig:cases-new}
\end{figure*}

    \item[(5)] [The top of $C$ lies on blue fence $\beta^+$, the
      bottom on blue fence $\beta^-$] The left side of $C$ is a
      vertical segment with endpoints on the blue fences $\beta^+$ and
      $\beta^-$.  Informally, consider sweeping this segment
      leftwards, keeping it of maximal length, while not crossing any
      fence segment: the upper endpoint moves along a ``staircase''
      going leftwards and making jumps upwards, while the lower
      endpoint moves along a staircase going leftwards and making
      jumps downwards, until one of five events (enumerated as cases
      (a) through (e) below) takes place. More formally, starting at
      (rectangular) cell $C$ in the vertical decomposition, we follow
      a path leftwards in the dual graph of the decomposition, from
      cell to a unique left neighboring cell, until we come to a cell,
      $C'$ (which is possibly $C$ itself), that has 2 (or more) left
      neighbors (because its left side is determined by the right
      endpoint of a red fence), or until we reach a cell whose top and
      bottom sides are not both on blue fences.
%
      %
      %
      We have the following subcases depending on the classification of the cell $C'$:

  \begin{description}
  \item[(a)] [The top of $C'$ is on the top side, $t$, of $Q$] We obtain 2 subfaces
      that are L-shaped CCRs, and one rectangular subface.
  \item[(b)] [The bottom of $C'$ is on the bottom side, $b$, of $Q$] We obtain 2 subfaces
      that are L-shaped CCRs, and one rectangular subface.
  \item[(c)] [The top of $C'$ is on a red fence segment $\alpha$] We obtain 3 subfaces
      that are L-shaped CCRs.
  \item[(d)] [The bottom of $C'$ is on a red fence segment $\alpha$] We obtain 3 subfaces
      that are L-shaped CCRs.
  \item[(e)] [The left side of $C'$ has a right endpoint of a red fence $\alpha$] We obtain 3 subfaces
      that are L-shaped CCRs.
  \end{description}
  In Figure~\ref{fig:vert-decomp}, one example (on the left) has
  $C'=C_2$ (case (d), with the bottom of $C'$ on a red fence), and one
  example (on the right) has $C'=C_3$ (case (e), with the left side of
  $C'$ determined by the right endpoint of a red fence).

\item[(6)] [The top of $C$ is on a blue fence $\beta$, the bottom of $C$ is on a red fence $\alpha$]
  We obtain 2 subfaces: one rectangle and one L-shaped CCR.

    \end{description}

In all cases, we see that there is a single vertical cut segment, at
most 3 subfaces, and all portions of the cuts lie on the grid, with
horizontal portions along fence segments.  This concludes the proof of
the lemma.
\end{proof}



\subsection*{Acknowledgements}
I thank
Mathieu Mari for his input on an earlier draft~\cite{DBLP:journals/corr/abs-2101-00326}. I thank Arindam Khan and the other coauthors of
\cite{galvez20214approximation} for discussions and for sharing their preprint~\cite{galvez2021-3approximation}.

This work was partially supported by the National Science Foundation
(CCF-2007275), the US-Israel Binational Science Foundation (BSF
project 2016116), Sandia National Labs, and DARPA (Lagrange).

\bibliographystyle{plainurl}
\bibliography{refs}

\end{document}